\def \Beta {\boldsymbol{\beta}}
\crefname{subsection}{subsection}{subsections}
\numberwithin{equation}{section}
\DeclareSymbolFont{symbolsC}{U}{txsyc}{m}{n}
\DeclareMathSymbol{\ngg}{\mathrel}{symbolsC}{52}
\newcommand\arraybslash{\let\\\@arraycr}
\newcolumntype{+}{>{\global\let\currentrowstyle\relax}}
\newcolumntype{^}{>{\currentrowstyle}}
\newlength{\bracewidth}
\def\fudge{\mathchoice{}{}{\mkern.5mu}{\mkern.8mu}}
\def\bbc#1#2{{\rm \mkern#2mu\vbar\mkern-#2mu#1}}
\def\bbb#1{{\rm I\mkern-3.5mu #1}}
\def\bba#1#2{{\rm #1\mkern-#2mu\fudge #1}}
\def\bb#1{{\count4=`#1 \advance\count4by-64 \ifcase\count4\or\bba A{11.5}\or
   \bbb B\or\bbc C{5}\or\bbb D\or\bbb E\or\bbb F \or\bbc G{5}\or\bbb H\or
   \bbb I\or\bbc J{3}\or\bbb K\or\bbb L \or\bbb M\or\bbb N\or\bbc O{5} \or
   \bbb P\or\bbc Q{5}\or\bbb R\or\bbc S{4.2}\or\bba T{10.5}\or\bbc U{5}\or
   \bba V{12}\or\bba W{16.5}\or\bba X{11}\or\bba Y{11.7}\or\bba Z{7.5}\fi}}
\def \R {\bb R}
\def \1 {\bb 1}
\def \n {\noindent}
\def \Beta {\boldsymbol{\beta}}
\newtheorem{theorem}{Theorem}[section]
\newtheorem{lemma}{Lemma}[section]
\title{Effects of migration on vector-borne diseases with forward and backward stage progression}
\author{Derdei Bichara\\
Department of Mathematics \& Center for Computational and
 Applied \\Mathematics, California State University, Fullerton, CA 92831, USA.}
\date{}
\begin{document}

\maketitle

\begin{abstract}\n 
Is it possible to break the host-vector chain of transmission when there is an influx of infectious hosts into a na\"{i}ve population and competent vector? To address this question, a class of vector-borne disease models with an arbitrary number of infectious stages that account for immigration of infective individuals is formulated. The proposed model accounts for forward and backward progression, capturing the mitigation and aggravation to and from any stages of the infection, respectively. The model has a rich dynamic, which depends on the patterns of infected immigrant influx into the host population and connectivity of the transfer between infectious classes. We provide conditions under which the answer of the initial question is positive. 
\end{abstract}

\noindent{\bf Mathematics Subject Classification:} 92D30, 34D23, 34D40, 34A34. 

\paragraph{\bf Keywords:}
Vector-borne diseases, Migration, Stage progression, Stage amelioration, Global stability.

\section{Introduction}

Vector-borne diseases represent a major public health problem around the world, cause over one million deaths, one billion cases each year \cite{world2014global}, and more than half of the world's population is at risk \cite{world2004world}. They are typically associated with the tropics and subtropics where these diseases are endemic. However, recently these diseases have expanded their geographical distribution and have been reported in many temperate countries. For instance,  Dengue and Chikungunya have been reported in France \cite{gould2010first,grandadam2011chikungunya,paty2014large,vega2013high}, Italy \cite{poletti2011transmission, rezza2007infection}, and Portugal \cite{wilder20142012}. The CDC recently \cite{CDCVectorBorneUSA} reported that mosquito, tick, and flea bite borne diseases tripled in the United States from 2004 through 2016.\\

Many drivers are reported to be behind the geographic expansion of vector-borne diseases, including but not limited to trade, travel, climate change, urbanization and other social upheaval phenomena \cite{jones2008global,kilpatrick2012drivers,rogers2006climate}. 
Particularly, immigration and migration have been pointed to be the leading drivers in the emergence of vector-borne diseases in temperate nations \cite{barnett2008role}. For instance, a term \textit{Airport malaria} has been coined by \cite{isaacson1989airport}. Indeed, it describes a malaria infection that has resulted from the bite of an infected tropical anopheline mosquito  by persons whose geographic history excludes exposure to this vector in its natural habitat \cite{isaacson1989airport}.   Hereafter, we use the term ``immigrants" to represent both endemic area borne individuals migrating into a ``na\"{i}ve" area as well as non-endemic area native individuals who acquired an \textit{airport vector-borne disease} after a stint in an endemic area. Naturally, the latter term follows the definition of \textit{Airport malaria}.\\

 Moreover, the existence of vector populations that are capable of transmitting several arboviruses in the US and western Europe has been widely documented. For instance, the vector \textit{Aedes albopictus} also known as the Asian tiger mosquito, a vector competent of transmitting many arbovirus including Japanese encephalitis, West Nile, and yellow fever, Dengue, etc., to humans \cite{mitchell1995geographic,shroyer1986aedes}, is well established in North America \cite{benedict2007spread,mitchell1995geographic,shroyer1986aedes} and Europe \cite{mitchell1995geographic}.\\

Together, the increasing interconnectedness of the world brings an influx of viremic (latent or infectious) individuals and their epidemiological life-history into naive geographic areas or populations, and thereby infecting local competent vector populations. This could potentially create a chain reaction that could lead to an autochthonous transmission cycle of arboviruses and sporadic outbreaks of vector-borne diseases in these otherwise naive host populations. For instance, the presumed index case of Italy's 2007 Chikungunya outbreak was a man from India who developed symptoms while visiting relatives in one of the villages where the outbreak started \cite{rezza2007infection}. Similarly, an estimated 475 cases of imported chikungunya are reported in mainland France \cite{paty2014large} from November 2013 to June 2014, and these cases are reportedly traced back to travelers returning from the French Caribbean islands where chikungunya and Dengue are endemic \cite{paty2014large}. The 2012 Portugal's Dengue outbreak was reported to be imported by a traveler from Venezuela  \cite{wilder20142012}.  %\\
It is therefore important the gauge the impacts of infected immigrants of the dynamics of vector-borne diseases. \\
 
Typically, modeling the dynamics of directly transmitted or vector-borne diseases have often been based on the assumption that the recruitment into the considered population is completely susceptible, and thereby sweeping the effects of global movement of individuals across the world at unprecedented levels under the rug. 
 %Typically, the flux of individuals or recruitment in the population have been considered ``naive" or susceptible.  That is, all newcomers and newborns are healthy. 
 To the best of our knowledge, Brauer and van den Driessche \cite{BraVdd01}  were the to first propose a mathematical model that accounts for  immigration that includes infected individuals using an SIS structure in an attempt to study HIV in prisons.  Subsequently, McCluskey and van den Driessche \cite{1056.92052} proposed a model studied with the same features where both immigration of latent and infective are considered, in the context tuberculosis. These two papers \cite{BraVdd01,1056.92052} showed that there is no disease-free equilibrium and that the endemic equilibrium (EE) is globally asymptotically stable. Indeed, the model proposed in  \cite{BraVdd01} is a two-dimensional, and the  Poincar\'e-Bendixson theorem has been used to study the global stability of the EE. In  \cite{1056.92052}, the authors  considered model is an $SEIS$ model and a geometric approach \cite{Limul95,MR95k:92022} is used to prove the global stability of the EE. 
Li et \textit{al.} \cite{guo2012impacts} generalized the before-mentioned models for staged-progression model -- by considering a model with $n$ infectious stages and a proportion $p_i$ of the total influx is incorporated for each infectious class $I_i$. The models in  the before-mentioned papers are all for directly transmitted diseases.\\

Recently, Tumwiine et \textit{al.} \cite{tumwiine2010host} investigated the effects of infected immigrants using an $SIR-SI$ Ross-Macdonal's model in the context of malaria and showed that the disease persists in host and vector populations whenever the proportion of infected immigrants is non zero. However, their model does not account of immigration of latent individuals, a critical category as these pass the precautionary measures of screenings, if these were in place. \\

In this paper, we derive and investigate the global behavior of a system that captures the dynamics of a class of vector-borne models accounting for flux of infected individuals at different stages of infection.  Of particular interest is the impact of the influx of infected individuals and transfer rates between infectious classes on the overall the dynamics of the model. The paper is organized as follows:\\

$\bullet$ We derive a class of vector-borne models with $n$ stages of infection, for which there is a flux of infected and infectious immigrants at all of these infectious stages. The formulated model accounts also for the progression and amelioration during the infectious stages, from an arbitrary stage $i$ to an arbitrary stage $j$.  The transfer is considered a progression if $i>j$ and an amelioration if $i<j$ (\Cref{sec:SPHostFlux}).\\

$\bullet$ We completely study the dynamics of the proposed model (\Cref{sec:GSVBAP}). It turns out that the model has a variety of dynamics, which depends on the patterns of the influx infected host into the population and the transfer rate matrix -- that describes the amelioration and deterioration of hosts' infectivity level. Particularly, we show that, under certain conditions, it is possible to corral the infectious hosts only into the classes in which they are replenished and maintain the vector populations disease-free. A threshold $\mathcal N^2(p,\mathbf{p},p_{n+1})$ plays a critical role for the existence of such steady-state. \\

$\bullet$ We provide the global dynamics of the model when there is no influx of infected individuals into the population, which surprisingly has not been done (\Cref{SharpThreshold}). In this case, the model exhibits the threshold phenomenon -- the basic reproduction number $\mathcal R_0^2$ determines the outcome of the disease both in host and vector population. It happens that $\mathcal R_0^2:=\mathcal N^2(0,\mathbf{0},0)$. \\

$\bullet$ Illustrations of the results and numerical simulations are carried out in \Cref{IllustrationsNSimulations}.

\section{Formulation of the model}\label{sec:SPHostFlux}
We consider a disease whose evolution is captured by a host-vector interaction for which the host population is composed of susceptible, exposed, recovered and infectious of stage $i$ ($1\leq i\leq n$). These subpopulations are denoted respectively by $S_h$, $E_h$, $R_h$ and $I_i$. The total host population is therefore $N_h=S_h+E_h+\sum_{i=1}^nI_i+R_h$. The vector population $N_v$ is composed of susceptible, exposed, and infectious arthropods; denoted by $S_v$, $E_v$, and $I_v$, respectively.\\

The total host population is replenished through a constant recruitment, $\pi_h$, that includes birth and migratory influx of individuals. Of this constant recruitment, a proportion $p$, $p_i$ ($i=1,\dots,n$) and $p_{n+1}$ is latent, infectious at stage $i$ and recovered, respectively. Thus, the total recruitment in the susceptible class is $\pi_h\left( 1-p-\sum_{i=1}^{n+1}p_i \right)$. Naturally, we assume that, $0\leq p\leq1$, $0\leq p_i\leq1$ and $0\leq p+\sum_{i=1}^{n+1}p_i<1$. \\
Susceptible hosts are infected at the rate $a\beta_{vh}\frac{I_v}{N_n}$, 
where $a$ is the biting/landing rate and $\beta_{vh}$ is the Host's infectiousness by arthropod per biting/landing. Vector's infection term is captured by 
 $a S_{v}\frac{\sum_{i=1}^n\beta_iI_i}{N_n},$ 
where $\beta_i$ is the vector's infectiousness by infected hosts of stage $i$. This accounts for the differential infectivity of vectors with respect to hosts' infectious stages.\\

Motivated by \cite{guo2012impacts,guo2012global,1008.92032}, we incorporate incremental and non-incremental amelioration and recrudescence in the infectiosity at each stage of the host's infection. For instance,  for models in \cite{BicharaIggidrSmith2017,cruz2012control,palmer2018dynamics}, the transitions between infection stages are incremental, that is, always from stage $i$ to $i+1$. However, with vector-borne diseases, a bite of infected arthropod to an already infected host, say at stage $i$, may increase this host's parasitemia, thereby catapulting its infectious class from stage $i$ to any stage, say $i+j$, where $1\leq i+j\leq n$. To incorporate this phenomenon, we denote by $\gamma_{ij}$, the \textit{per capita} rate at which the host progresses from stage $i$ to stage $j$. Similarly, the increase of treatments (which decrease the parasiteamia in the blood-stream) of vector-borne diseases could alleviate the host's infection and therefore, its stage could change form $i$ to $i-k$, where $ i-k\geq 0$. We denote by $\delta_{ij}$ the \textit{per capita} rate at which the host regresses from stage $i$ to stage $j$, where $j\leq i$.   These generalizations are illustrated in \Cref{fig:FlowSEIR1HostAmeliorationDeterioration}.\\

In concert, the overall dynamics of the Host-Vector infection is given by:
\begin{equation} \label{ModelAD}
\left\{\begin{array}{llll}%%%%\bea\label{2PNI}
\displaystyle\dot S_{h}=\pi_h\left( 1-p-\sum_{i=1}^{n+1}p_i \right)-a\,\beta_{vh}\,S_{h}\,\dfrac{I_{v}}{N_{h}}-\mu_h S_h\\
\displaystyle\dot E_{h}=p\pi_h+ \,a\beta_{vh}\,S_{h}\,\dfrac{I_{v}}{N_{h}}-(\mu_h+\nu_h+\eta)E_h\\
\displaystyle\dot I_{1}=p_1\pi_h+\nu_hE_h-(\mu_h+\eta_1)I_1-I_{1}\sum_{j=2}^n\gamma_{1j}+\sum_{j=2}^n\delta_{j1}I_{j}\\
\displaystyle\dot I_{2}=p_2\pi_h-(\mu_h+\eta_2)I_2-I_{2}\left(\sum_{j=1,j<2}^n\delta_{2j}+\sum_{j=1,j>2}^n\gamma_{2j}\right)+\left(\sum_{j=1,j>2}^n\delta_{j2}I_j+\sum_{j=1,j<2}^n\gamma_{j2}I_j\right)\\ 
\vdots\\
\displaystyle\dot I_{i}=p_i\pi_h-(\mu_h+\eta_i)I_i-I_{i}\left(\sum_{j=1,j<i}^n\delta_{ij}+\sum_{j=1,j>i}^n\gamma_{ij}\right)+\left(\sum_{j=1,j>i}^n\delta_{ji}I_j+\sum_{j=1,j<i}^n\gamma_{ji}I_j\right)\\ 
\displaystyle\dot I_{n}=p_i\pi_h-(\mu_h+\eta_n)I_n-I_{n}\sum_{j=1}^{n-1}\gamma_{nj}+\sum_{j=1}^{n-1}\delta_{jn}I_{j}\\
\displaystyle \dot R_h=p_{n+1}\pi_h+\sum_{i=1}^n\eta_iI_{i}-\mu_hR\\
\displaystyle\dot S_{v}=\pi_v-a S_{v}\sum_{i=1}^n\dfrac{\beta_i I_{i}}{N_{h}}-(\mu_v+\delta_v)S_{v}\\
\displaystyle\dot E_{v}=a S_{v}\sum_{i=1}^n\dfrac{\beta_i I_{i}}{N_{h}}-(\mu_v+\nu_v+\delta_v)E_{v}\\
\displaystyle\dot I_{v}=\nu_vE_v-(\mu_v+\delta_v)I_{v}
\end{array}\right.
\end{equation}
%%%
\begin{figure}[ht]
\centering
\includegraphics[scale =0.85]{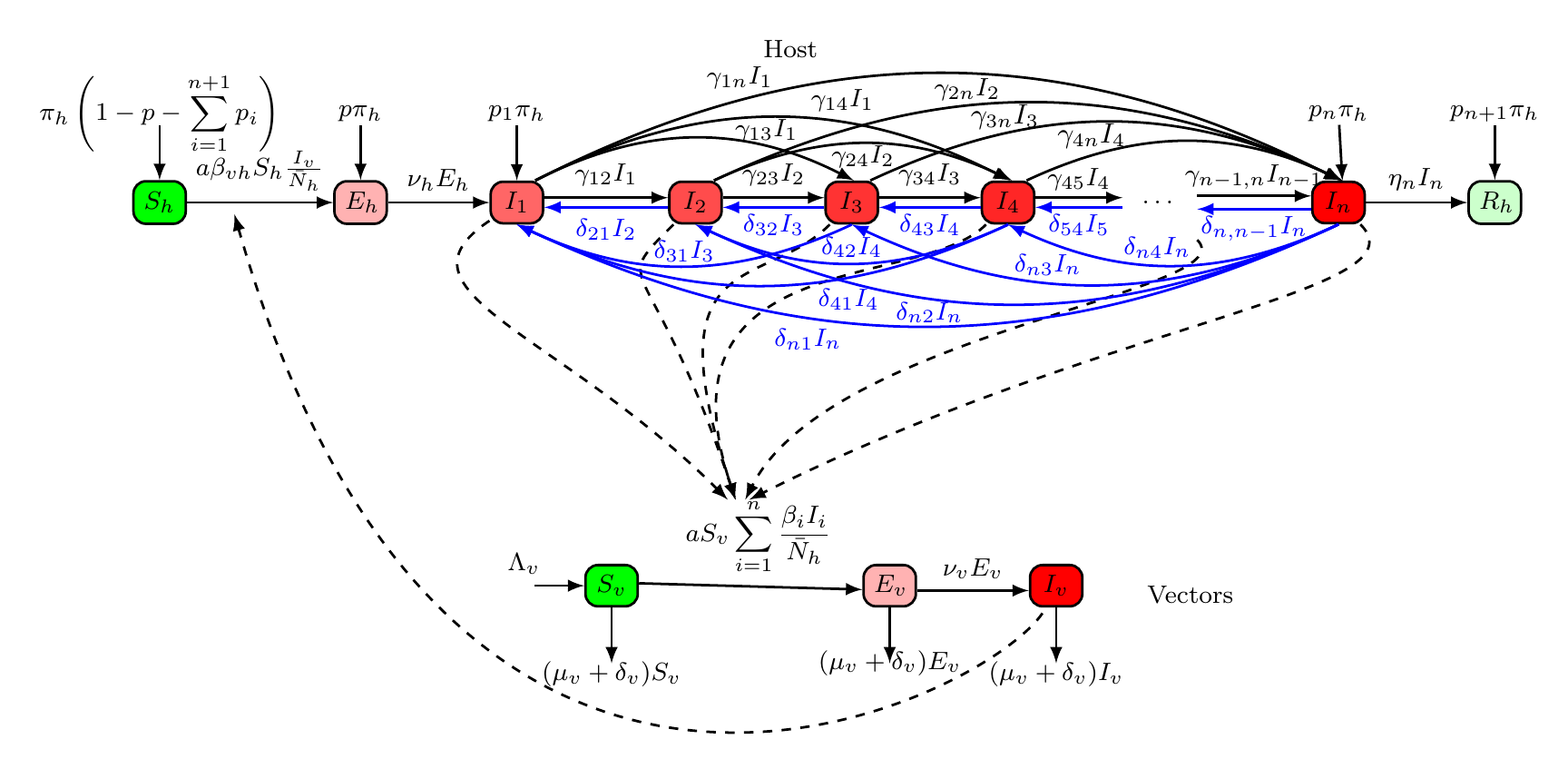}
\caption{Flow diagram of Model \ref{ModelAD}. Note that, to unclutter the figure, we did not display the arrows that represent the recruitments for $I_2$, $I_3$ and $I_4$. Similarly, the arrows representing the death and recovery rates in all host classes are not displayed.}
\label{fig:FlowSEIR1HostAmeliorationDeterioration} 
\end{figure}
To ease the notations, let us denote by 
   $$% \begin{equation} \label{Laplacian3M}
m_{ij}=\left\{\begin{array}{lll}%%%%\bea\label{2PNI}
\delta_{ji}\quad\textrm{if}\quad i<j,\\
0\quad\textrm{if}\quad i=j,\\
\gamma_{ji}\quad\textrm{if}\quad i>j,
\end{array}\right.
$$%\end{equation}
and 
$\alpha_v=\mu_v+\nu_v+\delta_v$, $\alpha_1=\mu_h+\eta_1 +\sum_{j=2}^n\gamma_{1j}=\mu_h+\eta_1 +\sum_{j=2}^nm_{j1}$ and for $i\geq2$,
\begin{eqnarray}\alpha_i&=&\mu_h+\eta_i +\left(\sum_{j=1,j<i}^n\delta_{ij}+\sum_{j=1,j>i}^n\gamma_{ij}\right)\nonumber\\
&=&\mu_h+\eta_i + \sum_{j=1}^nm_{ji}. \nonumber
\end{eqnarray}
The matrix $M=(m_{ij})_{1\leq i,j\leq n}$ is the transfer matrix between Host's infectious classes and the parameters $\alpha_i$ represent the rates at which infected of stage $i$ leave this stage. %
 The total host population is asymptotically constant. Indeed, its evolution is given by $\dot N_h=\pi_h-\mu_hN_h$ and thus, it is straightforward to show that $\displaystyle\lim_{t\to\infty}N_h=\dfrac{\pi_h}{\mu_h}.$
Moreover, the subsystem describing the dynamics of the host is triangular, and hence we can disregard the dynamics of the recovered host $R_h$. 
Hence, by abusively denoting $\displaystyle\lim_{t\to\infty}N_h$ again by $N_h$ and using the theory of asymptotically autonomous systems for triangular systems \cite{CasThi95,0478.93044},  
System (\ref{ModelAD}) could equivalently be written in a compact form as follows:
\begin{equation} \label{ModelAD2Compact}
\left\{\begin{array}{llll}%%%%\bea\label{2PNI}
\displaystyle\dot S_{h}=\pi_h\left( 1-p-\mathds{1}^T\mathbf{p}-p_{n+1} \right)-a\,\beta_{vh}\,S_{h}\,\dfrac{I_{v}}{N_{h}}-\mu_h S_h\\
\displaystyle\dot E_{h}=p\pi_h+ \,a\beta_{vh}\,S_{h}\,\dfrac{I_{v}}{N_{h}}-(\mu_h+\nu_h+\eta)E_h\\
\displaystyle\dot{\mathbf{I}}_h=\pi_h\mathbf{p}+\nu_hE_he_1-(\textrm{diag}(\alpha)-M)\mathbf{I}_h\\
\displaystyle\dot S_{v}=\pi_v
 -a\dfrac{ S_{v}}{N_h}\bigl\langle
      \Beta   
 \, \mbox{\Large $\mid$} \, \mathbf{I}_h   
\bigr\rangle
-(\mu_v+\delta_v)S_{v}\\
\displaystyle\dot E_{v}=a\dfrac{ S_{v}}{N_h}\bigl\langle
      \Beta    
 \, \mbox{\Large $\mid$} \, \mathbf{I}_h   
\bigr\rangle-(\mu_v+\nu_v+\delta_v)E_{v}\\
\displaystyle\dot I_{v}=\nu_vE_v-(\mu_v+\delta_v)I_{v},
\end{array}\right.
\end{equation}
    where $\mathbf{I}_h=(I_1,I_2,\dots,I_n)^T$, $\Beta=(\beta_1,\beta_2,\dots,\beta_n)^T$, $\mathbf{p}=(p_1,p_2,\dots,p_n)^T$, and $M=G^T+D^T$ with $G=(\gamma_{ij})$ representing the progression matrix, or forward flow transition matrix while $D=(\delta_{ij})$ represents the amelioration matrix, or backward transition flow matrix. More precisely,
  
      \begin{equation}\label{GD}G=\begin{pmatrix}
    0 &   \gamma_{12}&\gamma_{13} & \dots  & \gamma_{1n}\\
  0 &0 &\gamma_{23} & \dots  & \gamma_{2n}\\
    \vdots& \vdots &\ddots  & \ddots &  \vdots \\
    0 & 0 &0 & \dots &  \gamma_{n-1,n}\\ 
0 & 0 &0 & \dots  & 0 
    \end{pmatrix}
   \quad\textrm{and}\quad 
%    and  
%    $$
    D=\begin{pmatrix}
    0 &  0&  \dots & 0& 0\\
   \delta_{21} &0 &  \dots & 0& 0\\
    \vdots& \ddots &\ddots   & \vdots& \vdots \\
     \delta_{n-1,1} & \delta_{n-1,2}   & \dots &0& 0\\ 
 \delta_{n1} & \delta_{n2}   & \dots &\delta_{n,n-1}& 0 
    \end{pmatrix}.
    \end{equation}
The parameters of System (\ref{ModelAD}) are described in Table \ref{TableHV1}. The flow chart capturing the infection process is represented in Fig.~(\ref{fig:FlowSEIR1HostAmeliorationDeterioration}). \\

\begin{table}[h!]\label{TableHV1}
  \begin{center}
    \caption{Description of the parameters used in System (\ref{ModelAD}).}
    \label{tab:Param}
    \begin{tabular}{cc}
     \toprule
      Parameters & Description \\
      \midrule
% \begin{tabular}{lll}
 $\pi_h$  & Recruitment of the host\\
  $\pi_v$  & Recruitment of vectors\\
   $p$  & Proportion of latent immigrants \\
    $p_i$  & Proportion of infectious immigrants at stage $i$\\
 $a$  & Biting rate \\
$\mu_h$  & Host's natural death rate\\
$\beta_{v,h}$  & Host's infectiousness by mosquitoes per biting\\  
$\beta_{i}$ & Vector's infectiousness by host at stage $i$ per biting\\  
$\nu_h$  & Host's rate at which the exposed individuals become infectious\\
$\eta_i$ & Per capita recovery rate of an infected  host at stage $i$\\
$\gamma_{ij}$ &  Host's per capita  progression rate from stage $i$ to $j$\\
$\delta_{ij}$ &  Host's per capita  regression rate from stage $i$ to $j$\\
$\mu_v$  & Vectors' natural mortality rate\\
$\delta_v$  & Vectors' control-induced mortality rate\\
$\nu_v$  & Rate at which the exposed vectors become infectious\\
  \bottomrule
    \end{tabular}
  \end{center}
\end{table}

Model (\ref{ModelAD2Compact}) follows an $SEI^nR-SEI$ structure. That is, of the host and vector populations dynamics follow an and $SEI^nR$ and $SEI$ types of model, respectively. The choices are make to capture some key features in modeling different vector-borne diseases. Indeed, many special cases could be obtained from our general framework to fit a particular arboviral disease. For instance, if $\nu_h\to\infty$, the Host's dynamics will be an $SI^nR$ model. An $SIR-SI$ model have been considered for malaria \cite{tumwiine2010host} and Dengue \cite{Esteva98Dengue} while an $SI^nR-SI$ model was deemed more suited for tick-borne relapsing fever \cite{johnson2016modeling,palmer2018dynamics}.
\\

Model (\ref{ModelAD2Compact}) generalizes other models proposed in the literature in the following five ways:
\begin{itemize}
\item If $D=\mathbf{0}_{n,n}$ and $\gamma_{ij}=0$ for all $i,j$, except when $j=i+1$, Model (\ref{ModelAD2Compact}) consists of a class of staged progression vector-borne diseases models with an influx of infected individuals of each class into the considered population. In this case, 
\begin{itemize}
\item Model (\ref{ModelAD2Compact}) extents the existing stage progression vector-borne models to incorporate a differential proportions of the overall recruitment in all infected classes. This allows us to gauge the impact of imported cases on the dynamics of vector-borne infections. When $\mathbf{p}=\mathbf{0}$, $p=0$, in Model (\ref{ModelAD2Compact}), we obtain the model proposed and studied in \cite{BicharaIggidrSmith2017}. Moreover, our model extends also \cite{palmer2018dynamics}, for which  $\beta_i=\beta$, for all $i$ and the recruitment constitutes of susceptible individuals only. Our model generalizes also \cite{cruz2012control}, which considers Chagas disease model with two stages, namely acute and chronic phases. 
\item  Model (\ref{ModelAD2Compact}) generalizes existing models that investigate staged-progression for directly transmitted infections for which influx of infected individuals are considered \cite{guo2012impacts,1056.92052} and \cite{BraVdd01}, where no stages are considered in the latter.
\item Model (\ref{ModelAD2Compact})  extends also the model in \cite{tumwiine2010host}, where the authors considered a host-vector model $SIR-SI$ with infectious immigrants in investigating the effects of the latter on Malaria dynamics, by incorporating a latent class, $n$ stages of infection in the host's dynamics and a differential infectivity of vectors with respect to host's infectious stages.
 \end{itemize}
 \item If $D$ and $G$ are  as defined in (\ref{GD}), our model extends \cite{BicharaIggidrSmith2017,BraVdd01,cruz2012control,guo2012impacts,1056.92052,palmer2018dynamics,tumwiine2010host} by incorporating forward (deterioration) and backward (amelioration) stage progression. Moreover, the progressions or regressions are not necessarily incremental.
 \item Model  (\ref{ModelAD2Compact}) extends the models considered in \cite{guo2012impacts,guo2012global,1008.92032} to vector-borne disease models and the incorporation of influx of infected in each of the hosts' infectious classes.
 \end{itemize}
 Overall, Model  (\ref{ModelAD2Compact}) generalizes in some fashion or aspect models in \cite{BicharaIggidrSmith2017,BraVdd01,cruz2012control,guo2012impacts,guo2012global,1056.92052,palmer2018dynamics,tumwiine2010host,1008.92032}.

Also, it worthwhile to notice that our system could be seen as vector-borne model that vertically transmitted at each stage of the infection. That is, when off-springs are infected by mothers during pregnancy or delivery. Zika virus is a natural example of a vector-borne disease that is vertically transmitted \cite{tabata2016zika}. \\

The following result shows the solutions of Model (\ref{ModelAD2Compact}) are positive and remain bounded at all times, thereby making the model biologically grounded. 
\begin{lemma} The set
$$\Omega=\left\{ (S_h,E_h,\mathbf{I}_h,S_v,E_v,I_v)\in\R^{n+5}\;\mid S_h+E_h+\mathds{1}^T\mathbf{I}_h\leq N_h, S_v+E_v+I_v\leq N_v:=\frac{\pi_v}{\mu_v+\delta_v} \right\}$$
is a compact positively invariant for System (\ref{ModelAD2Compact}).
\end{lemma}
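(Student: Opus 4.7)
The plan is to split the proof into (a) verifying that the non-negative orthant $\R^{n+5}_+$ is forward-invariant under the flow of (\ref{ModelAD2Compact}), and (b) verifying that the two linear upper bounds $S_h+E_h+\mathds{1}^T\mathbf{I}_h\leq N_h$ and $S_v+E_v+I_v\leq N_v$ are preserved. Since $\Omega$ is an intersection of closed half-spaces that is bounded by construction, compactness is automatic in finite dimension, so (a) and (b) together give the lemma.

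For (a), I would apply the standard tangency test on each coordinate hyperplane $\{x_i=0\}$. The scalar coordinates $S_h, E_h, S_v, E_v, I_v$ are immediate: setting each in turn to zero leaves only non-negative contributions on the right-hand side (the recruitment $\pi_h(1-p-\mathds{1}^T\mathbf{p}-p_{n+1})$, the input $p\pi_h$ together with the non-negative incidence, $\pi_v$, the incidence into $E_v$, and $\nu_v E_v$ respectively). For the infectious-host vector I would rewrite the third block as $\dot{\mathbf{I}}_h = (M-\textrm{diag}(\alpha))\mathbf{I}_h + \pi_h\mathbf{p} + \nu_h E_h e_1$, observe that $M-\textrm{diag}(\alpha)$ is Metzler since its off-diagonal entries $m_{ij}$ are non-negative by construction, and invoke the classical fact that Metzler systems with non-negative forcing preserve the non-negative cone $\R_+^n$.

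For (b), set $N_h^\ast := S_h+E_h+\mathds{1}^T\mathbf{I}_h$ and sum the first three blocks of System (\ref{ModelAD2Compact}). The bookkeeping identity
$$\mathds{1}^T(\textrm{diag}(\alpha)-M)\mathbf{I}_h = \sum_{i=1}^n (\mu_h+\eta_i)I_i,$$
which follows directly from $\alpha_i=\mu_h+\eta_i+\sum_j m_{ji}$ (the $m_{ij}$ cancel when summed over sources and destinations, reflecting the conservative nature of inter-stage transfers), yields $\dot N_h^\ast \leq \pi_h - \mu_h N_h^\ast$. A one-line comparison (or Gronwall) argument then shows that $N_h^\ast(0)\leq N_h=\pi_h/\mu_h$ forces $N_h^\ast(t)\leq N_h$ for all $t\geq 0$. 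The analogue for $N_v^\ast:=S_v+E_v+I_v$ is even simpler, since $\dot N_v^\ast = \pi_v - (\mu_v+\delta_v)N_v^\ast$ holds with equality, and the same comparison gives $N_v^\ast(t)\leq N_v$.

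There is no substantive obstacle here; this is the standard boundedness/positivity calculation for compartmental epidemic models. The only step requiring genuine care is the algebraic identity displayed above, which hinges on the precise definitions of $\alpha_i$ and of $M=G^T+D^T$ in (\ref{GD}); once that identity is unpacked, the rest reduces to Gronwall's inequality plus the Heine--Borel theorem.
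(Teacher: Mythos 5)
Your proof is correct; the paper in fact states this lemma without giving any proof, and your argument is the standard positivity-plus-comparison calculation it implicitly relies on. The only point worth flagging is that $\Omega$ as printed omits the non-negativity constraints (it is written over $\R^{n+5}$ rather than the non-negative orthant), without which the set is not compact; your part (a) supplies exactly what is needed once the definition is read as intended, and your bookkeeping identity $\mathds{1}^T(\textrm{diag}(\alpha)-M)\mathbf{I}_h=\sum_{i=1}^n(\mu_h+\eta_i)I_i$ is the right way to see that the inter-stage transfer terms conserve mass so that $\dot N_h^\ast\leq \pi_h-\mu_h N_h^\ast$.
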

In the next section, we investigate the steady states solutions of Model (\ref{ModelAD2Compact}) and their asymptotic behavior.

\section{Global Stability Analysis}\label{sec:GSVBAP}  
The next theorem establishes the existence of endemic equilibria of System (\ref{ModelAD2Compact}) and provides conditions under which they may exist. Following Thieme \cite{MR1993355}, we use the nomenclature \textit{strongly} endemic equilibrium if all of its components are positive and \textit{weakly} endemic equilibrium, if at least one of the infected component is positive. Naturally, we start with the assumption that $\mathbf{p}\neq\mathbf{0}_{\mathbb{R}^n}$. The case $\mathbf{p}=\mathbf{0}_{\mathbb{R}^n}$ is dealt in \Cref{SharpThreshold}.
\begin{theorem}\label{ExistenceEEAD}\hfill

The equilibria of System (\ref{ModelAD2Compact})  are as follows:
\begin{enumerate}
\item\label{i} If $\Beta=0$, a unique weakly endemic equilibrium $(\bar{S}_h, \bar{E}_h,\bar{\mathbf{I}}_h,\bar{S}_v,0,0)$ exists.
\item\label{ii} If $\Beta\neq0$ and $p\neq0$, it exists a unique strongly endemic equilibrium $({S}_h^*, {E}_h^*,\mathbf{I}_h^*,S_v^*,E_v^*,I_v^*)$.
\item\label{iii} If $\Beta$ and $\mathbf{p}$ are such that 
$\bigl\langle
      \Beta   
 \, \mbox{\Large $\mid$} \, (\textrm{diag}(\alpha)-M)^{-1} \mathbf{p}  
\bigr\rangle\neq0$, a unique strongly endemic equilibrium $({S}_h^\sharp, E_h^\sharp,\mathbf{I}_h^\sharp,{S}_v^\sharp,E_v^\sharp,I_v^\sharp)$ exists.
\item\label{iv} If \Cref{i}, \Cref{ii} and \Cref{iii} are not satisfied, then a threshold $\mathcal N_0^2(p,\mathbf{p},p_{n+1})$, defined by:
$$
\mathcal N_0^2(p,\mathbf{p},p_{n+1})=\frac{a^2\beta_{vh}\nu_v\nu_h}{\alpha_v(\mu_v+\delta_v)\alpha_h\mu_h}\frac{N_v}{N_h}\frac{\pi_h\left( 1-p-\mathds{1}^T\mathbf{p}-p_{n+1} \right)}{N_h}\bigl\langle
      \Beta   
 \, \mbox{\Large $\mid$} \, (\textrm{diag}(\alpha)-M)^{-1} e_1  
\bigr\rangle,
$$
 exists and for which
\begin{itemize}
\item If $\mathcal N_0^2(p,\mathbf{p},p_{n+1})\leq1$, a unique weakly endemic equilibrium $({S}_h^\diamond, 0,\mathbf{I}_h^\diamond,{S}_v^0,0,0)$ where $\mathbf{I}_h^\diamond>0$ exists. 
\item  If $\mathcal N_0^2(p,\mathbf{p},p_{n+1})>1$, a unique strongly endemic equilibrium $(\tilde{S}_h, \tilde{E}_h,\tilde{\mathbf{I}}_h,\tilde{S}_v,\tilde{E}_v,\tilde{I}_v)$ exists.
\end{itemize}
\end{enumerate}
\end{theorem}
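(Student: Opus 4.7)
The plan is to set every time derivative in System (\ref{ModelAD2Compact}) to zero and collapse the resulting algebraic system onto a single scalar fixed-point equation whose solution structure matches the four cases. First I would record that $\mathrm{diag}(\alpha)-M$ is a nonsingular $M$-matrix: its column sums equal $\mu_h+\eta_i>0$ and its off-diagonal entries $-m_{ij}$ are nonpositive, so $(\mathrm{diag}(\alpha)-M)^{-1}$ exists and is entrywise nonnegative. Consequently the host infectious vector at equilibrium is the affine, nonnegative map
\begin{equation*}
\mathbf{I}_h^*=(\mathrm{diag}(\alpha)-M)^{-1}(\pi_h\mathbf{p}+\nu_h E_h^*\,e_1),
\end{equation*}
so $\langle\Beta\mid\mathbf{I}_h^*\rangle=c_1\pi_h+c_2\nu_h E_h^*$ with $c_1:=\langle\Beta\mid(\mathrm{diag}(\alpha)-M)^{-1}\mathbf{p}\rangle\geq 0$ and $c_2:=\langle\Beta\mid(\mathrm{diag}(\alpha)-M)^{-1}e_1\rangle>0$ whenever $\Beta\neq 0$. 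From the vector block I then read $S_v^*=\pi_v/[(\mu_v+\delta_v)+a\langle\Beta\mid\mathbf{I}_h^*\rangle/N_h]$, $E_v^*=aS_v^*\langle\Beta\mid\mathbf{I}_h^*\rangle/(\alpha_v N_h)$, $I_v^*=\nu_v E_v^*/(\mu_v+\delta_v)$, and from the host block $S_h^*=\pi_h(1-p-\mathds{1}^T\mathbf{p}-p_{n+1})/(\mu_h+a\beta_{vh}I_v^*/N_h)$. Feeding these into $\alpha_h E_h=p\pi_h+a\beta_{vh}S_h I_v/N_h$ produces a single scalar equation for $E_h^*$ (equivalently, for $I_v^*$).

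Cases (\ref{i})--(\ref{iii}) then follow quickly. For (\ref{i}), $\Beta=0$ severs vector-to-host feedback, so $E_v^*=I_v^*=0$ and the host block is affine, yielding a single weakly endemic equilibrium directly. For (\ref{ii}), $p>0$ forces $E_h^*\geq p\pi_h/\alpha_h>0$, after which $c_2>0$ gives $\langle\Beta\mid\mathbf{I}_h^*\rangle>0$ and hence $E_v^*,I_v^*>0$; the reduced equation $\alpha_h E_h=p\pi_h+\Phi(E_h)$ has a continuous, strictly increasing, strictly concave, bounded right-hand side with $\Phi(0)\geq 0$, so a unique positive solution follows from the intermediate value theorem combined with strict convexity of $E_h\mapsto \alpha_h E_h-\Phi(E_h)$. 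For (\ref{iii}), even with $E_h^*=0$ one has $\langle\Beta\mid\mathbf{I}_h^*\rangle\geq c_1\pi_h>0$, which already forces $I_v^*>0$, which then forces $E_h^*>0$ through $\dot E_h=0$; uniqueness follows from the same concavity argument, now with $\Phi(0)>0$ in place of $p\pi_h>0$.

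Case (\ref{iv}) is the delicate one and is where I expect the bulk of the work to live. With $p=0$ and $c_1=0$ the scalar equation takes the form $I_v=F(I_v)$ for a smooth function $F:[0,\infty)\to[0,\infty)$ with $F(0)=0$ (because both $E_h$ and $\langle\Beta\mid\mathbf{I}_h\rangle$ then vanish with $I_v$) and $F$ bounded above by $N_v$. The root $I_v^\diamond=0$ always yields the weakly endemic equilibrium with $E_h^\diamond=0$ and $\mathbf{I}_h^\diamond=\pi_h(\mathrm{diag}(\alpha)-M)^{-1}\mathbf{p}>0$. The function $F$ is the composition of two increasing, strictly concave maps of M\"obius type, namely $E_h(I_v)=K_1 I_v/(L_1+I_v)$ (via $S_h$) followed by $I_v=K_2 y/(L_2+y)$ applied to $y=c_2\nu_h E_h$ (via the vector block); since $f\circ g$ is strictly concave whenever $f,g$ are increasing and strictly concave, $F$ is strictly concave. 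Hence a positive fixed point exists and is unique precisely when $F'(0)>1$. A chain-rule computation through the cascade $I_v\mapsto S_h,E_h\mapsto\langle\Beta\mid\mathbf{I}_h\rangle\mapsto S_v,E_v\mapsto I_v$, using $\pi_v=(\mu_v+\delta_v)N_v$ and $S_h(0)=\pi_h(1-p-\mathds{1}^T\mathbf{p}-p_{n+1})/\mu_h$, identifies $F'(0)$ with $\mathcal N_0^2(p,\mathbf{p},p_{n+1})$ exactly. The principal obstacle is this bookkeeping: carefully tracking that $F$ is genuinely strictly concave through its nested rational dependencies (so that the threshold is sharp and the weakly and strongly endemic branches cannot coexist), and matching $F'(0)$ to the stated formula without sign or factor slips.
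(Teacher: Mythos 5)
Your proposal is correct, and its skeleton is the same as the paper's: set the derivatives to zero, use that $\textrm{diag}(\alpha)-M$ is a nonsingular M-matrix with nonnegative inverse to write $\mathbf{I}_h^\ast$ affinely in $E_h^\ast$, and collapse everything onto a single scalar equation in $I_v^\ast$ whose root structure distinguishes the four cases. Where you diverge is in the endgame. The paper clears denominators and writes that scalar equation explicitly as a quadratic $A{I_v^\ast}^2+BI_v^\ast+C=0$ with $A>0$, then reads off the cases from the signs of $B$ and $C$: cases \ref{ii} and \ref{iii} are exactly $C<0$ (one sign change, one positive root), and case \ref{iv} is $C=0$ with $B=(\mu_v+\delta_v)\mu_hN_h\bigl(1-\mathcal N_0^2\bigr)$, so the threshold is literally the sign of $B$. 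You instead keep the equation in fixed-point form $I_v=F(I_v)$ and exploit that $F$ is a composition of increasing, concave M\"obius maps, so that uniqueness comes from strict concavity and the threshold from $F'(0)=\mathcal N_0^2$. The two are equivalent in content (your $F(0)>0$ is the paper's $C<0$; your $F'(0)>1$ with $F(0)=0$ is the paper's $B<0$ with $C=0$), but your route buys a cleaner uniqueness argument at the cost of having to verify strict concavity through the nested rational dependencies, while the paper's buys an explicit formula for $\mathcal N_0^2$ at the cost of heavier coefficient bookkeeping. Two small points to tighten: your phrase ``$c_2>0$ whenever $\Beta\neq0$'' silently uses $(\textrm{diag}(\alpha)-M)^{-1}e_1\gg0$, i.e.\ that every stage is reachable from stage $1$ in the transfer graph --- the paper makes the same implicit assumption, so this is not a gap relative to it, but it should be flagged; and in case \ref{i} it is the host-to-vector transmission that $\Beta=0$ severs (forcing $E_v^\ast=I_v^\ast=0$), which in turn kills the vector-to-host feedback.
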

\begin{proof}
An equilibrium $(S_h^\ast,E_h^\ast,I_h^\ast,\dots,I_n^\ast,S_v^\ast,E_v^\ast,I_v^\ast)$ for  Model (\ref{ModelAD2Compact}) satisfies the following relations.
\begin{equation} \label{ModelAD2Endemic1}
\left\{\begin{array}{llll}%%%%\bea\label{2PNI}
\displaystyle \Lambda_h=a\,\beta_{vh}\,S_{h}^\ast\,\dfrac{I_{v}^\ast}{N_{h}}+\mu_h S_h^\ast\\
\displaystyle (\mu_h+\nu_h+\eta)E_h^\ast=p\pi_h+ \,\beta_{vh}\,S_{h}^\ast\,\dfrac{I_{v}^\ast}{N_{h}}\\
\displaystyle (\textrm{diag}(\alpha)-M)\mathbf{I}_h^\ast=\pi_h \mathbf{p}+\nu_hE_h^\ast e_1\\
\displaystyle \pi_v=
 a\dfrac{ S_{v}^\ast}{N_h}\bigl\langle
      \Beta   
 \, \mbox{\Large $\mid$} \, \mathbf{I}_h^\ast   
\bigr\rangle
+(\mu_v+\delta_v)S_{v}^\ast\\
\displaystyle (\mu_v+\nu_v+\delta_v)E_{v}^\ast=a\dfrac{ S_{v}^\ast}{N_h}\bigl\langle
      \Beta    
 \, \mbox{\Large $\mid$} \, \mathbf{I}_h^\ast   
\bigr\rangle\\
\displaystyle (\mu_v+\delta_v)I_{v}^\ast=\nu_vE_v^\ast,
\end{array}\right.
\end{equation}
where $\displaystyle\Lambda_h=\pi_h\left( 1-p-\mathds{1}^t\mathbf{p} -p_{n+1}\right)$.
Using these relationship and $N_v=S_v-E_v-I_v$, one could express $I_v^\ast$ in terms of $\mathbf{I}_h^\ast$, as follows:
\begin{equation}\label{IVstarEE}I_v^\ast\left(\mu_v+\delta_v+\frac{a}{N_h}\bigl\langle
      \Beta   
 \, \mbox{\Large $\mid$} \, \mathbf{I}_h^\ast   
\bigr\rangle\right)=\frac{a\nu_v}{(\mu_v+\nu_v+\delta_v)}\frac{N_v}{N_h}\bigl\langle
      \Beta   
 \, \mbox{\Large $\mid$} \, \mathbf{I}_h^\ast   
\bigr\rangle.\end{equation}
 Moreover, the first equation of (\ref{ModelAD2Endemic1}) leads to: $$S_h^\ast=\frac{\Lambda_hN_h}{\mu_hN_h+a\beta_{vh}I_v^\ast}>0.$$ 
Furthermore, since the matrix $\textrm{diag}(\alpha)-M$ is strictly diagonally dominant and thus invertible, we obtain:
$$\mathbf{I}_h^\ast=\pi_h(\textrm{diag}(\alpha)-M)^{-1}\mathbf{p}+\nu_hE_h^\ast(\textrm{diag}(\alpha)-M)^{-1}e_1,$$
and
\begin{eqnarray*}E_h^\ast 
&=& \frac{1}{(\mu_h+\nu_h+\eta)}\left(p\pi_h+\,\dfrac{\Lambda_ha\,\beta_{vh}I_{v}^\ast}{\mu_hN_h+a\beta_{vh}I_v^\ast}\right).
\end{eqnarray*}
Therefore, $\mathbf{I}_h^\ast$ could be written as:
\begin{equation}\label{Ihast}
\mathbf{I}_h^\ast=\pi_h(\textrm{diag}(\alpha)-M)^{-1}\mathbf{p}+\frac{\nu_h}{(\mu_h+\nu_h+\eta)}\left(p\pi_h+\,\dfrac{\Lambda_ha\,\beta_{vh}I_{v}^\ast}{\mu_hN_h+a\beta_{vh}I_v^\ast}\right)(\textrm{diag}(\alpha)-M)^{-1}e_1.
\end{equation}
The relation (\ref{Ihast}) is key for the remaining of the proof,
as we will use it  to compute  $\bigl\langle
      \Beta   
 \, \mbox{\Large $\mid$} \, \mathbf{I}_h^\ast   
\bigr\rangle$ and obtain a quadratic equation in $I_v^\ast$ using \Cref{IVstarEE}. The latter equation leads to:

\begin{multline}\label{IVstarEEQuadratic}
0=I_v^\ast\left(\mu_v+\delta_v+\frac{a}{N_h}\bigl\langle
      \Beta   \, \mbox{\Large $\mid$} \, \pi_h(\textrm{diag}(\alpha)-M)^{-1}\mathbf{p}\bigr\rangle+\right.\\
      \left.+\frac{a}{N_h}\frac{\nu_h}{(\mu_h+\nu_h+\eta)}\left(p\pi_h+\,\dfrac{\Lambda_ha\,\beta_{vh}I_{v}^\ast}{\mu_hN_h+a\beta_{vh}I_v^\ast}\right)\bigl\langle\Beta   \, \mbox{\Large $\mid$}   (\textrm{diag}(\alpha)-M)^{-1}e_1  
\bigr\rangle\right)-\\
\frac{a\nu_v}{(\mu_v+\nu_v+\delta_v)}\frac{N_v}{N_h}\bigl\langle
      \Beta   
 \, \mbox{\Large $\mid$} \, \pi_h(\textrm{diag}(\alpha)-M)^{-1}\mathbf{p}\bigr\rangle-\\
 \frac{\nu_h}{(\mu_h+\nu_h+\eta)}\left(p\pi_h+\,\dfrac{\Lambda_ha\,\beta_{vh}I_{v}^\ast}{\mu_hN_h+a\beta_{vh}I_v^\ast}\right)\frac{a\nu_v}{(\mu_v+\nu_v+\delta_v)}\frac{N_v}{N_h}\bigl\langle
      \Beta   
 \, \mbox{\Large $\mid$} \,(\textrm{diag}(\alpha)-M)^{-1}e_1  
\bigr\rangle\end{multline}
After some rearrangement, \Cref{IVstarEEQuadratic} could be written as \begin{equation}\label{quad}A{I_v^\ast}^2+BI^\ast_v+C=0,\end{equation} where
\begin{align*}
A&=a\beta_{vh}\left[  \mu_v+\delta_v+\frac{a}{N_h}\bigl\langle
      \Beta   \, \mbox{\Large $\mid$} \, \pi_h(\textrm{diag}(\alpha)-M)^{-1}\mathbf{p}\bigr\rangle+\right.\\
      & \left.+\frac{a}{N_h}\frac{\nu_h}{(\mu_h+\nu_h+\eta)}\left(p\pi_h+ \Lambda_h   \right)\bigl\langle\Beta   \, \mbox{\Large $\mid$}   (\textrm{diag}(\alpha)-M)^{-1}e_1  
\bigr\rangle \right]\\
&>0,
\end{align*}
\begin{align*}
B&=(\mu_v+\delta_v)\mu_nN_h +\frac{a}{N_h}\pi_h\bigl\langle
      \Beta   \, \mbox{\Large $\mid$} \,(\textrm{diag}(\alpha)-M)^{-1}\mathbf{p}\bigr\rangle\mu_hN_h\\
      &+\frac{a}{N_h}\frac{\nu_h}{(\mu_h+\nu_h+\eta)}p\pi_h\bigl\langle\Beta   \, \mbox{\Large $\mid$}   (\textrm{diag}(\alpha)-M)^{-1}e_1  
\bigr\rangle \mu_nN_h\\
&-\frac{a^2\beta_{vh}\nu_v}{(\mu_v+\nu_v+\delta_v)}\frac{N_v}{N_h}\pi_h\bigl\langle
      \Beta   \, \mbox{\Large $\mid$} \,(\textrm{diag}(\alpha)-M)^{-1}\mathbf{p}\bigr\rangle\\
&-\frac{a^2\beta_{vh}\nu_v\nu_h}{(\mu_v+\nu_v+\delta_v)(\mu_h+\nu_h+\eta)}\frac{N_v}{N_h}p\pi_h\bigl\langle
      \Beta   \, \mbox{\Large $\mid$} \,(\textrm{diag}(\alpha)-M)^{-1}e_1\bigr\rangle\\
&-\frac{a^2\beta_{vh}\nu_v\nu_h}{(\mu_v+\nu_v+\delta_v)(\mu_h+\nu_h+\eta)}\frac{N_v}{N_h}\Lambda_h\bigl\langle
      \Beta   \, \mbox{\Large $\mid$} \,(\textrm{diag}(\alpha)-M)^{-1}e_1\bigr\rangle,
\end{align*}
and
\begin{align*}
C&=-\frac{a\nu_v}{(\mu_v+\nu_v+\delta_v)}\frac{N_v}{N_h}\pi_h\mu_hN_h \bigl\langle
      \Beta   
 \, \mbox{\Large $\mid$} \, \pi_h(\textrm{diag}(\alpha)-M)^{-1}\mathbf{p}\bigr\rangle\\
 &- 
 \frac{a\nu_v N_v}{(\mu_v+\nu_v+\delta_v)} \frac{N_v}{N_h} \frac{\nu_h \mu_hN_h}{(\mu_h+\nu_h+\eta)}p\pi_h\bigl\langle
      \Beta   
 \, \mbox{\Large $\mid$} \,(\textrm{diag}(\alpha)-M)^{-1}e_1  
\bigr\rangle.
\end{align*}
Now, we investigate cases for which \Cref{IVstarEEQuadratic} has non-negative solutions. \\

$\bullet$ If $\Beta=0$, then $C=0$ and $B=(\mu_v+\delta_v)\mu_nN_h>0$. Hence, $I_v^*=0$ is the unique solution of the quadratic equation. Thus, the unique equilibrium for System (\ref{ModelAD2Compact}) is
$(\bar{S}_h, \bar{E}_h,\bar{\mathbf{I}}_h,\bar{S}_v,0,0)$, where
$\bar{S}_h=\frac{\Lambda_h}{\mu_h}$, $ \bar E_h=\frac{p\pi_h}{(\mu_h+\nu_h+\eta)}$,
$\bar{\mathbf{I}}_h^\ast=\pi_h(\textrm{diag}(\alpha)-M)^{-1}\mathbf{p}+\frac{\nu_hp\pi_h}{(\mu_h+\nu_h+\eta)} (\textrm{diag}(\alpha)-M)^{-1}e_1,
$
and $\bar{S}_v=\frac{\Lambda_v}{\mu_v+\delta_v}$. This proves \Cref{i}.\\

$\bullet$ If $\Beta\neq0$ and $p\neq0$, then $C<0$ and therefore \Cref{quad} has a unique solution such that $I_v^*>0$. Thus,  from \Cref{Ihast}, System (\ref{ModelAD2Endemic1}), and using the fact that $(\textrm{diag}(\alpha)-M)^{-1}e_1\gg0$, we deduce \Cref{ii}.\\

$\bullet$ If $\Beta$ and $p$ are such that $\bigl\langle
      \Beta   
 \, \mbox{\Large $\mid$} \, (\textrm{diag}(\alpha)-M)^{-1}\mathbf{p}\bigr\rangle\neq0$, then we also have $C<0$; that is, it exists a unique $I_v^\sharp>0$ of \Cref{quad}. As in the previous point, this leads to \Cref{iii}.\\ 

$\bullet$ If the conditions of \Cref{i}, \Cref{ii} and \Cref{iii} are not satisfied; that is, if $\Beta\neq0$, $p=0$ and $\mathbf{p}$ is such that $\bigl\langle
      \Beta   
 \, \mbox{\Large $\mid$} \, (\textrm{diag}(\alpha)-M)^{-1}\mathbf{p}\bigr\rangle=0$. In this case, $C=0$ and $B$ could be written as:
 \begin{align*}
  %%%%%%%%%%%    
B   &=(\mu_v+\delta_v)\mu_hN_h \\
&-\frac{a^2\beta_{vh}\nu_v\nu_h}{(\mu_v+\nu_v+\delta_v)(\mu_h+\nu_h+\eta)}\frac{N_v}{N_h}\Lambda_h\bigl\langle
      \Beta   \, \mbox{\Large $\mid$} \,(\textrm{diag}(\alpha)-M)^{-1}e_1\bigr\rangle  \\
         &=(\mu_v+\delta_v)\mu_hN_h \left[1
-\frac{a^2\beta_{vh}\nu_v\nu_h}{(\mu_v+\nu_v+\delta_v)(\mu_v+\delta_v)(\mu_h+\nu_h+\eta)\mu_h}\frac{N_v}{N_h}\frac{\Lambda_h}{N_h}\bigl\langle
      \Beta   \, \mbox{\Large $\mid$} \,(\textrm{diag}(\alpha)-M)^{-1}e_1\bigr\rangle   \right]\\
      %%%
&=(\mu_v+\delta_v)\mu_hN_h\left(1-\mathcal N_0^2(p,\mathbf{p},p_{n+1}) \right).
\end{align*}
Thus, it follows that if $\mathcal N_0^2(p,\mathbf{p},p_{n+1})\leq1$, then $B\geq0$, leading to $I_v^*=0$ and $\mathbf{I}_h^\ast=\pi_h(\textrm{diag}(\alpha)-M)^{-1}\mathbf{p}>\mathbf{0}_{\R^n}$. If $\mathcal N_0^2(p,\mathbf{p},p_{n+1})>1$, then $B<0$ and therefore $I_v^*>0$, leading to a strongly positive equilibrium.
\end{proof}

The following two theorems establish the global stability analysis for the two types of endemic equilibria exhibited in \Cref{ExistenceEEAD}. This gives a complete description of the global asymptotic behavior of System (\ref{ModelAD2Compact}) whenever there is an influx of infected or infectious individuals into the population.

\begin{theorem}\label{GASEEAD}\hfill\\
Let $({S}_h^*, {E}_h^*,\mathbf{I}_h^*,S_v^*,E_v^*,I_v^*)$ be a strongly endemic equilibrium of Model (\ref{ModelAD2Compact}). This equilibrium is globally asymptotically stable whenever it exists.
\end{theorem}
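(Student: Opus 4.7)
The strategy is to construct a Volterra-Goh type Lyapunov function
$$
V = c_{S_h}S_h^*\Phi\!\left(\frac{S_h}{S_h^*}\right) + c_{E_h}E_h^*\Phi\!\left(\frac{E_h}{E_h^*}\right) + \sum_{i=1}^n c_{i}I_i^*\Phi\!\left(\frac{I_i}{I_i^*}\right) + c_{S_v}S_v^*\Phi\!\left(\frac{S_v}{S_v^*}\right) + c_{E_v}E_v^*\Phi\!\left(\frac{E_v}{E_v^*}\right) + c_{I_v}I_v^*\Phi\!\left(\frac{I_v}{I_v^*}\right),
$$
with $\Phi(r) = r - 1 - \ln r \geq 0$ (zero iff $r=1$) and positive weights $c_X$ to be chosen so that every cross-term in $\dot V$ either cancels or combines into a non-positive contribution. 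This is the well-developed template used in \cite{BicharaIggidrSmith2017,guo2012impacts,guo2012global,1008.92032} for related compartmental models, which I adapt here to the vector-borne setting with infectious influx at every stage and the general forward-backward transfer matrix $M$.

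The first step is to use the equilibrium identities of (\ref{ModelAD2Endemic1}) to replace every constant input ($\Lambda_h$, $p\pi_h$, $\pi_h\mathbf{p}$, $\pi_v$) by an equilibrium-weighted combination. The derivative $\dot V$ then decomposes into a diagonal dissipation part producing terms of the form $-\mu_h S_h^*\Phi(S_h/S_h^*)$ and $-\mu_v S_v^*\Phi(S_v/S_v^*)$, and a sum of bilinear coupling terms coming from the two forces of infection and from the transfer matrix $M$. Using the Volterra log-identity $-\ln r = \Phi(r) - (r-1)$ and matching the weights $c_X$ to the equilibrium flow coefficients between consecutive compartments, each bilinear term can be folded into a scalar $-\Phi(\cdot)$ evaluated at a product of ratios $X/X^*$. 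The influx terms $p\pi_h$, $p_i\pi_h$, $\pi_v$ contribute additional non-positive $\Phi$-expressions that reinforce, rather than obstruct, the argument.

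The main obstacle is the choice of weights $c_1,\dots,c_n$ that absorb the off-diagonal entries of $M$ coupling the infectious stages. Since $\mathrm{diag}(\alpha)-M$ is a non-singular $M$-matrix, I invoke the Guo-Li-Shuai graph-theoretic construction---equivalently, Kirchhoff's matrix-tree theorem applied to the weighted directed graph whose edge $j \to i$ carries weight $m_{ij} I_j^*$---to produce strictly positive $c_i$ for which the $M$-generated cross-terms reorganize into a non-positive sum of $\Phi$-contributions of the form $-c_i\,m_{ij}I_j^*\,\Phi(I_j I_i^*/(I_j^* I_i))$. The remaining weights $c_{E_h}$, $c_{I_v}$, $c_{E_v}$ are then fixed by matching the bilinear coefficients arising in the two infection forces $a\beta_{vh}S_h I_v/N_h$ and $a S_v \sum_{i=1}^n \beta_i I_i/N_h$ against the corresponding equilibrium flows.

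Once $\dot V \leq 0$ is established on the compact positively invariant set $\Omega$, LaSalle's invariance principle closes the argument: every trajectory in the largest invariant subset of $\{\dot V = 0\}$ must have all ratios $X/X^*$ constant along the flow, which, combined with the uniqueness of the strongly endemic equilibrium proven in \Cref{ExistenceEEAD}, collapses this subset to the single point $(S_h^*, E_h^*, \mathbf{I}_h^*, S_v^*, E_v^*, I_v^*)$, yielding global asymptotic stability.
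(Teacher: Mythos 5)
Your overall strategy coincides with the paper's: the same Volterra--Goh function (your $X^*\Phi(X/X^*)$ is the paper's $\int_{X^*}^{X}(1-X^*/x)\,dx$), the same substitution of the equilibrium identities (\ref{ModelAD2Endemic1}), the same matching of the scalar weights on $S_h,E_h,S_v,E_v,I_v$ to the two forces of infection, the same Kirchhoff matrix-tree selection of $c_1,\dots,c_n$, and the same reorganization of the residue into $-\Phi$ terms. So the architecture is right.

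There is, however, one concrete gap in the step you describe as the "main obstacle." You apply the matrix-tree theorem to the digraph whose only edges are $j\to i$ with weight $m_{ij}I_j^*$, and you relegate the infection forces entirely to the scalar weights $c_{E_h},c_{E_v},c_{I_v}$. That is not enough, for two reasons. First, positivity: in the canonical staged-progression case ($M$ strictly triangular, e.g.\ pure forward transitions $\gamma_{i,i+1}$) the digraph of $M$ alone is acyclic, so the Laplacian built only from the $m_{ij}I_j^*$ edges is reducible and all but one of its diagonal cofactors vanish; the matrix-tree theorem then gives $c_i=0$ for most $i$, and the Lyapunov function degenerates. Second, cancellation: the vector force of infection $aS_v\sum_i\beta_i I_i/N_h$ couples every stage $i$ with $\beta_i>0$ back to stage $1$ through the loop $I_i\to E_v\to I_v\to E_h\to I_1$, and this produces leftover logarithmic cross-terms $\ln\bigl(I_1 I_i^*/(I_1^* I_i)\bigr)$ that cannot be absorbed by adjusting $c_{E_v},c_{I_v},c_{E_h}$ alone. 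The fix — which is exactly what the paper's matrix $B$ in (\ref{B}) implements — is to augment the digraph with feedback edges $i\to 1$ of weight $\nu_h E_h^*\,\beta_i I_i^*/\bigl\langle\Beta\,\mbox{\Large $\mid$}\,\mathbf{I}_h^*\bigr\rangle$ before applying the matrix-tree theorem; these edges restore irreducibility (hence $c_i\gg 0$) and are precisely what makes the residual sum $\mathcal S$ of logarithms telescope to zero. Once that correction is made, your LaSalle closing argument is fine (and arguably cleaner than the paper's appeal to a "definite-negative" derivative, since $\dot{\mathcal V}$ only vanishes on a set that must be shown to reduce to the equilibrium).
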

\begin{proof}\hfill\\

Let
consider the following Lyapunov function $  \mathcal V=  \mathcal V_h+  \mathcal V_v$, where
\begin{eqnarray*}
  \mathcal V_h&=&c_0\int_{S_h^\ast}^{S_h}\left(1-\frac{S_h^\ast}{x}\right)dx+c_0\int_{E_{h}^\ast}^{E_h}\left(1-\frac{E_h^\ast}{x}\right)dx+\sum_{i=1}^nc_{i}\int_{I_{i}^\ast}^{I_{i}}\left(1-\frac{I_{i}^\ast}{x}\right)dx,
  \end{eqnarray*}
  and,
  $$\mathcal V_{v}=c_{v}\int_{S_{v}^\ast}^{S_{v}}\left(1-\frac{S_{v}^\ast}{x}\right)dx +c_{v}\int_{E_{v}^\ast}^{E_{v}}\left(1-\frac{E_{v}^\ast}{x}\right)dx+\frac{\mu_v+\nu_v+\delta_v}{\nu_v}c_{v}\int_{I_{v}^\ast}^{I_{v}}\left(1-\frac{I_{v}^\ast}{x}\right)dx.$$
  The coefficients $c=(c_1,c_2,\dots,c_n)^T$ are positive to be determined later. The coefficient $c_0$ and $c_{v}$ are related to $c_1$ as follows:
  \begin{equation}\label{c0cvc1}c_0a\beta_{vh}\,S_{h}^*\dfrac{I_{v}^*}{N_{h}}=c_1\nu_hE_h^*\quad\textrm{and}\quad   c_{v} aS_v^\ast\frac{1}{N_h}=c_1\frac{\nu_hE_h^\ast}{\sum_{i=1}^n \beta_i I_i^\ast}.\end{equation}
    This function is definite positive. We want to prove that its derivative along the trajectories of System (\ref{ModelAD2Compact}) is definite-negative.
   Throughout the proof, we will be using the component-wise endemic relations (\ref{ModelAD2Endemic1}). That is,
    \begin{equation} \label{ModelAD2Endemic}
\left\{\begin{array}{llll} 
\displaystyle \pi_h\left( 1-p-\sum_{i=1}^{n+1}p_i \right)=a\,\beta_{vh}\,S_{h}^\ast\,\dfrac{I_{v}^\ast}{N_{h}}+\mu_h S_h^\ast\\
\displaystyle (\mu_h+\nu_h+\eta)E_h^\ast=p\pi_h+ \,a\beta_{vh}\,S_{h}^\ast\,\dfrac{I_{v}^\ast}{N_{h}}\\
\displaystyle \alpha_1I_1^\ast=p_1\pi_h+\nu_hE_h^\ast+\sum_{j=2}^nm_{1j}I_{j}^\ast\\
\displaystyle \alpha_2I_{2}^\ast=p_2\pi_h+\sum_{j=1}^nm_{2j}I_j^\ast \\ 
\vdots\\
\displaystyle \alpha_iI_i^\ast=p_i\pi_h+ \sum_{j=1}^nm_{ij}I_j^\ast\\ 
\displaystyle \alpha_nI_n^\ast=p_n\pi_h+\sum_{j=1}^{n}m_{nj}I_{j}^\ast\\
%\displaystyle \dot R=p_{n+1}\pi_h+\sum_{i=1}^n\eta_iI_{i}-\mu_hR\\
\displaystyle \Lambda_v=a S_{v}^\ast\sum_{i=1}^n\dfrac{\beta_i I_{i}^\ast}{N_{h}}+(\mu_v+\delta_v)S_{v}^\ast\\
\displaystyle (\mu_v+\nu_v+\delta_v)E_{v}^\ast=a S_{v}^\ast\sum_{i=1}^n\dfrac{\beta_i I_{i}^\ast}{N_{h}}\\
\displaystyle (\mu_v+\delta_v)I_{v}^\ast=\nu_vE_v^\ast
\end{array}\right.
\end{equation}

  The  derivative of $\mathcal{V}_h$ along the trajectories of System (\ref{ModelAD2Compact}) is:
  \begin{eqnarray}\label{Lyapc1}
  \dot {\mathcal V_h} &=&c_0\left(1-\frac{S_h^\ast}{S_h}\right)\dot{S}_h+c_0\left(1-\frac{E_h^\ast}{E_{h}}\right)\dot E_{h}+\sum_{i=1}^nc_i\left(1-\frac{I_i^\ast}{I_i}\right)\dot I_i\nonumber\\
   % &=&c_0\left(\dot S_h+\dot E_h-\frac{S_h^\ast}{S_h}\dot{S}_h-\frac{E_h^\ast}{E_{h}}\dot E_{h}\right)+\sum_{i=1}^nc_i\left(1-\frac{I_i^\ast}{I_i}\right)\dot I_i\nonumber\\
&=&c_0\mu_h S_h^\ast\left(2-\frac{S_h}{S_h^\ast}-\frac{S_h^\ast}{S_h}\right)+c_0a\beta_{vh}S_{h}^*\,\dfrac{I_{v}^*}{N_{h}}\left(2 -\frac{S_h^\ast}{S_h}-\frac{S_h}{S_h^\ast} \frac{I_v}{I_v^\ast} \frac{E_h^\ast}{E_{h}}\right)+c_0p\pi\left(2-\frac{E_h^\ast}{E_{h}}-\frac{E_{h}}{E_h^*}  \right)\nonumber\\
  &&-c_0\left(  \,\beta_{vh}\,S_{h}^*\dfrac{I_{v}^*}{N_{h}} \right)\frac{E_{h}}{E_h^*}+c_0a\,\beta_{vh}S_h^\ast\,\dfrac{I_{v}}{N_{h}}+\sum_{i=1}^nc_i\left(1-\frac{I_i^\ast}{I_i}\right)\dot I_i %  %%%%
   \end{eqnarray}
    Using the endemic relation $\displaystyle\alpha_1I_1^*=p_1\pi_h+\nu_hE_h^*+\sum_{j=1}^nm_{1j}I_{j}^*$, and the relationship between $c_0$ and $c_1$,  Equation (\ref{Lyapc1}) yields to
  \begin{eqnarray}\label{Lyapc111}
  \dot{\mathcal V}_h
 &=&c_0\mu_h S_h^\ast\left(2-\frac{S_h}{S_h^\ast}-\frac{S_h^\ast}{S_h}\right)+c_0a\beta_{vh}S_{h}^*\,\dfrac{I_{v}^*}{N_{h}}\left(3 -\frac{S_h^\ast}{S_h}-\frac{S_h}{S_h^\ast} \frac{I_v}{I_v^\ast} \frac{E_h^\ast}{E_{h}}  -\frac{E_h}{E_h^*}\frac{I_1^\ast}{I_1}\right)+c_0p\pi\left(2-\frac{E_h^\ast}{E_{h}}-\frac{E_{h}}{E_h^*}  \right)\nonumber\\
  &&+c_0a\,\beta_{vh}S_h^\ast\,\dfrac{I_{v}}{N_{h}}+ c_1p_1\pi_h\left(2-\frac{I_1^\ast}{I_1} -\frac{I_1}{I_1^\ast}\right)    -c_1\left(\nu_hE_h^*+\sum_{j=1}^nm_{1j}I_{j}^*\right)\frac{I_1}{I_1^*}+c_1\sum_{j=1}^nm_{1j}I_{j} \nonumber\\
     && +c_1\sum_{j=1}^n m_{1j}I_{j}^*
-c_1\frac{I_1^\ast}{I_1}(\sum_{j=1}^nm_{1j}I_{j}) +\sum_{i=2}^nc_i\left(1-\frac{I_i^\ast}{I_i}\right)\dot I_i 
   \end{eqnarray}
Noting that, from the endemic relations (\ref{ModelAD2Endemic}), we have $\alpha_{i}I_i^\ast=p_i\pi_h+ \sum_{j=1}^nm_{ij}I_j^\ast$, and thus, the last term of Equation (\ref{Lyapc111}) leads to
     \begin{eqnarray}\label{ciIi2}
     c_i\left(1-\frac{I_i^\ast}{I_i}\right)\dot I_i 
       & =&c_ip_i\pi_h\left(2-\frac{I_i^\ast}{I_i} -\frac{I_i}{I_i^*}\right)+ c_i\sum_{j=1}^nm_{ij}I_j^\ast-c_i\left(\sum_{j=1}^nm_{ij}I_j^\ast\right)\frac{I_i}{I_i^*}\nonumber\\
       &&+ c_i\sum_{j=1}^n m_{ij}I_j -c_i\frac{I_i^\ast}{I_i}\left(\sum_{j=1}^n m_{ij}I_j \right)
        \end{eqnarray}

        Moreover, we can check that the derivative of $\mathcal V_v$ along the trajectories of System (\ref{ModelAD2Compact}) is:
        \begin{equation}\label{VvDot1}
         \dot{\mathcal V_v}=c_{v,1}\left( \mathcal A_v+\sum_{i=1}^naS_v^\ast\frac{\beta_i I_i^\ast}{N_i}\left(3  -\frac{S_v^\ast}{S_v}-\frac{S_v}{S_v^\ast} \frac{I_i}{I_i^\ast} \frac{E_v^\ast}{E_{v}}-\frac{E_vI_v^\ast}{E_v^\ast I_v}  \right)+aS_v^\ast\sum_{i=1}^n \frac{\beta_i I_i}{N_h}-aS_v^\ast\sum_{i=1}^n \frac{\beta_i I_i^\ast}{N_h}\frac{I_v}{I_v^\ast}\right),
        \end{equation}
where $\displaystyle\mathcal A_v=(\mu_v+\delta_v)S_v^*\left(2-\dfrac{S_v^*}{S_v}-\dfrac{S_v}{S_v^*}\right).$       
       Combining equations (\ref{Lyapc111}), (\ref{ciIi2}), and (\ref{VvDot1}), we obtain:
        \begin{eqnarray}\label{CombinedLyap}
 \dot{\mathcal V} 
        &=&c_0\mu_h S_h^\ast\left(2-\frac{S_h}{S_h^\ast}-\frac{S_h^\ast}{S_h}\right)+c_0a\beta_{vh}S_{h}^*\,\dfrac{I_{v}^*}{N_{h}}\left(3 -\frac{S_h^\ast}{S_h}-\frac{S_h}{S_h^\ast} \frac{I_v}{I_v^\ast} \frac{E_h^\ast}{E_{h}}  -\frac{E_h}{E_h^*}\frac{I_1^\ast}{I_1}\right)+c_0p\pi\left(2-\frac{E_h^\ast}{E_{h}}-\frac{E_{h}}{E_h^*}  \right)\nonumber\\
  &&+c_0a\,\beta_{vh}S_h^\ast\,\dfrac{I_{v}}{N_{h}}+\sum_{i=1}^nc_ip_i\pi_h\left(2-\frac{I_i^\ast}{I_i} -\frac{I_i}{I_i^*}\right)   -  c_1\left(\nu_hE_h^*+\sum_{j=1}^nm_{1j}I_{j}^*\right)\frac{I_1}{I_1^*}+c_1\sum_{j=1}^nm_{1j}I_{j} \nonumber\\
     && +c_1\sum_{j=1}^n m_{1j}I_{j}^*
-c_1\frac{I_1^\ast}{I_1}\left(\sum_{j=1}^nm_{1j}I_{j}\right) \nonumber\\
&&+\sum_{i=2}^n\left[
 c_i\sum_{j=1}^nm_{ij}I_j^\ast-c_i\left(\sum_{j=1}^nm_{ij}I_j^\ast\right)\frac{I_i}{I_i^*} 
      + c_i\sum_{j=1}^n m_{ij}I_j -c_i\frac{I_i^\ast}{I_i}\left(\sum_{j=1}^n m_{ij}I_j \right)
\right]\nonumber\\
%
%
%+ \dot{\mathcal V_v}
&&+c_{v,1}\left( \mathcal A_v+\sum_{i=1}^naS_v^\ast\frac{\beta_i I_i^\ast}{N_i}\left(3  -\frac{S_v^\ast}{S_v}-\frac{S_v}{S_v^\ast} \frac{I_i}{I_i^\ast} \frac{E_v^\ast}{E_{v}}-\frac{E_vI_v^\ast}{E_v^\ast I_v}  \right)+aS_v^\ast\sum_{i=1}^n \frac{\beta_i I_i}{N_h}-aS_v^\ast\sum_{i=1}^n \frac{\beta_i I_i^\ast}{N_h}\frac{I_v}{I_v^\ast}\right)
   \end{eqnarray}
    Given the relationship (\ref{c0cvc1}), the linear terms in $I_v$ in Equation (\ref{CombinedLyap}) cancel. Furthermore, by substituting $\mathcal A_v$ by its expression and  $c_{v,1}$ by their expressions, Equation (\ref{CombinedLyap}) leads to
          \begin{eqnarray}\label{CombinedLyap3}
\dot{\mathcal V} 
        &=&c_0\mu_h S_h^\ast\left(2-\frac{S_h}{S_h^\ast}-\frac{S_h^\ast}{S_h}\right)+c_{v,1}(\mu_v+\delta_v)S_v^*\left(2-\dfrac{S_v^*}{S_v}-\dfrac{S_v}{S_v^*}\right)\nonumber\\
        &&+c_0p\pi\left(2-\frac{E_h^\ast}{E_{h}}-\frac{E_{h}}{E_h^*}  \right)+\sum_{i=1}^nc_ip_i\pi_h\left(2-\frac{I_i^\ast}{I_i} -\frac{I_i}{I_i^*}\right) \nonumber\\
        &&+c_{v,1}\sum_{i=1}^naS_v^\ast\frac{\beta_i I_i^\ast}{N_i}\left(6 -\frac{S_h^\ast}{S_h}-\frac{S_h}{S_h^\ast} \frac{I_v}{I_v^\ast} \frac{E_h^\ast}{E_{h}}  -\frac{E_h}{E_h^*}\frac{I_1^\ast}{I_1}  -\frac{S_v^\ast}{S_v}-\frac{S_v}{S_v^\ast} \frac{I_i}{I_i^\ast} \frac{E_v^\ast}{E_{v}}-\frac{E_vI_v^\ast}{E_v^\ast I_v} \right)\nonumber\\
  &&  -  c_1\left(\nu_hE_h^*+\sum_{j=1}^nm_{1j}I_{j}^*\right)\frac{I_1}{I_1^*}+c_1\sum_{j=1}^nm_{1j}I_{j} %\nonumber\\
  %   &&
   +c_1\sum_{j=1}^n m_{1j}I_{j}^*
-c_1\frac{I_1^\ast}{I_1}\left(\sum_{j=1}^nm_{1j}I_{j}\right) \nonumber\\
&&+\sum_{i=2}^n\left[
 c_i\sum_{j=1}^nm_{ij}I_j^\ast-c_i\left(\sum_{j=1}^nm_{ij}I_j^\ast\right)\frac{I_i}{I_i^*} 
      + c_i\sum_{j=1}^n m_{ij}I_j -c_i\frac{I_i^\ast}{I_i}\left(\sum_{j=1}^n m_{ij}I_j \right)
\right]\nonumber\\
%
%
%+ \dot{\mathcal V_v}
&&+c_1\frac{\nu_hE_h^\ast}{\sum_{l=1}^n \beta_l I_l^\ast}\left(\sum_{i=1}^n  \beta_i I_i\right)
   \end{eqnarray}
 We choose the vector $c=(c_1,c_2,\dots,c_n)^T$ to be the solution of  the linear system $Bc=0$, where
 \begin{equation}\label{B}
B=\begin{pmatrix}
 -\diamond_{11} & m_{21}I_1^\ast &  m_{31}I_{1}^\ast & \dots & \dots & m_{n1}I_{1}^\ast\\
\nu_hE_h^*\frac{\beta_2 I_2^\ast}{\sum_{i=1}^n \beta_i I_i^\ast}+ m_{12}I_{2}^\ast  & -\diamond_{22} & m_{32}I_2^\ast & \dots &\dots & m_{n2}I_{2}^\ast\\
\nu_hE_h^*\frac{\beta_3 I_3^\ast}{\sum_{i=1}^n \beta_i I_i^\ast} + m_{13}I_{3}^\ast & m_{23}I_{3}^\ast & -\diamond_{33} & \dots& \dots&  m_{n3}I_{n}^\ast\\
 \vdots& \vdots & \vdots& \ddots & \ddots& \vdots \\
\nu_hE_h^* \frac{\beta_n I_n^\ast}{\sum_{i=1}^n \beta_i I_i^\ast}+m_{1n}I_{n}^\ast & m_{2n}I_{n}^\ast& m_{3n}I_{n}^\ast & \hdots&\hdots &-\diamond_{nn}
\end{pmatrix}
\end{equation}
where $$\diamond_{11}=\left(\nu_hE_h^*\frac{\sum_{i=2}^n \beta_i I_i^\ast }{\sum_{i=1}^n \beta_i I_i^\ast }+\sum_{j=1}^nm_{1j}I_{j}^*\right),\quad\textrm{
and for   } k\geq 2,\quad
 \diamond_{kk}=\sum_{j=1}^{n}m_{kj}I_j^\ast.$$
   The matrix $B$ is irreducible. 
     Indeed, since  $\mathbf{I}_h^*\gg0$, we notice that all elements of the second upper diagonal of $B$ are all non zero, as $m_{i+1,i}=\gamma_{i,i+1}$, and thus represent the incremental transition between infectious classes. This, along with the first column, makes the matrix $B$ irreducible. 
 Hence, it could be shown that $\dim(\ker(B))= 1$; and by the Kirchhoff's matrix tree theorem\cite{bollobas2013modern,moon1970counting}, $c_i=-C_{ii}\gg0$ where $C_{ii}$ is the cofactor of the $i^{th}$ diagonal of  
$B$. Hence, it exists $c=(c_1,c_2,\dots,c_n)^T\gg0$ such that $Bc=0$. Moreover, this implies that, in Equation (\ref{CombinedLyap3}), we have:
\begin{eqnarray*}
0&=&  -  c_1\left(\nu_hE_h^*+\sum_{j=1}^nm_{1j}I_{j}^*\right)\frac{I_1}{I_1^*}+c_1\sum_{j=1}^nm_{1j}I_{j} %\nonumber\\
  %   &&
   +c_1\sum_{j=1}^n m_{1j}I_{j}^*
 \nonumber\\
&&+\sum_{i=2}^n\left[
 c_i\sum_{j=1}^nm_{ij}I_j^\ast-c_i\left(\sum_{j=1}^nm_{ij}I_j^\ast\right)\frac{I_i}{I_i^*} 
      + c_i\sum_{j=1}^n m_{ij}I_j  
\right]\nonumber\\
%
%
%+ \dot{\mathcal V_v}
&&+c_1\frac{\nu_hE_h^\ast}{\sum_{l=1}^n \beta_l I_l^\ast}\left(\sum_{i=1}^n  \beta_i I_i\right)
   \end{eqnarray*}
Thus, (\ref{CombinedLyap3}) yields to:

 \begin{align}\label{CombinedLyap4}
\dot{\mathcal V} 
 &= c_0\mu_h S_h^\ast\left(2-\frac{S_h}{S_h^\ast}-\frac{S_h^\ast}{S_h}\right)+c_{v,1}(\mu_v+\delta_v)S_v^*\left(2-\dfrac{S_v^*}{S_v}-\dfrac{S_v}{S_v^*}\right)\nonumber\\
 & +c_0p\pi\left(2-\frac{E_h^\ast}{E_{h}}-\frac{E_{h}}{E_h^*}  \right)+\sum_{i=1}^nc_ip_i\pi_h\left(2-\frac{I_i^\ast}{I_i} -\frac{I_i}{I_i^*}\right) \nonumber\\
        & +c_{v,1}\sum_{i=1}^naS_v^\ast\frac{\beta_i I_i^\ast}{N_i}\underbrace{\left(6 -\frac{S_h^\ast}{S_h}-\frac{S_h}{S_h^\ast} \frac{I_v}{I_v^\ast} \frac{E_h^\ast}{E_{h}}  -\frac{E_h}{E_h^*}\frac{I_1^\ast}{I_1}  -\frac{S_v^\ast}{S_v}-\frac{S_v}{S_v^\ast} \frac{I_i}{I_i^\ast} \frac{E_v^\ast}{E_{v}}-\frac{E_vI_v^\ast}{E_v^\ast I_v} \right)}_{\mathcal W_i}\nonumber\\
 & +\sum_{i=1}^n\sum_{j=1}^nc_im_{ij}I_j^\ast \left(
1
       -\frac{I_i^\ast}{I_i} \frac{I_j^\ast}{I_j}  
\right).
   \end{align}

 The first three terms of (\ref{CombinedLyap4}) are definite-positive. Now, we will break down the last two terms in (\ref{CombinedLyap4}) into definite-negative terms. Indeed, following \cite{guo2012global}, we transform each theses expressions as sums of terms in the form of $f(x)=1-x+\ln x$. To this end, we will use the fact that the function $f(x)$ is definite negative around $x^\ast=1$. Indeed, using the properties of natural logarithm function, the expression of $\mathcal W_i$ in (\ref{CombinedLyap4}) could be written as:
  \begin{align*}%\label{Wi}
\mathcal W_i &= 6 -\frac{S_h^\ast}{S_h}-\frac{S_h}{S_h^\ast} \frac{I_v}{I_v^\ast} \frac{E_h^\ast}{E_{h}}  -\frac{E_h}{E_h^*}\frac{I_1^\ast}{I_1}  -\frac{S_v^\ast}{S_v}-\frac{S_v}{S_v^\ast} \frac{I_i}{I_i^\ast} \frac{E_v^\ast}{E_{v}}-\frac{E_vI_v^\ast}{E_v^\ast I_v}\nonumber\\
    &= \left(1-\frac{S_h^\ast}{S_h}+\ln\frac{S_h^\ast}{S_h}\right)+\left(1-\frac{S_h}{S_h^\ast} \frac{I_v}{I_v^\ast} \frac{E_h^\ast}{E_{h}}+\ln\frac{S_hI_vE_h^\ast}{S_h^\ast I_v^\ast E_{h}} \right)+\left(1-\frac{E_vI_v^\ast}{E_v^\ast I_v} +\ln\frac{E_vI_v^\ast}{E_v^\ast I_v} \right)\nonumber\\
&   + \left(1 -\frac{E_h}{E_h^\ast}\frac{I_1^\ast}{I_1} +\ln\frac{E_hI_1^\ast}{E_h^\ast I_1}   \right)
+\left(1-\frac{S_vI_i E_v^\ast}{S_v^\ast I_i^\ast E_{v}} +\ln\frac{S_vI_i E_v^\ast}{S_v^\ast I_i^\ast E_{v}} \right)
+\left(1 
  -\frac{S_v^\ast}{S_v}+\ln\frac{S_v^\ast }{S_v}\right)%\nonumber\\
%  &
   +\ln\frac{ I_1}{I_1^\ast}\frac{ I_i^\ast }{I_i} 
      \end{align*} 
   Noting that   $$1- \frac{I_i^\ast}{I_i} \frac{I_j}{I_j^\ast}=1- \frac{I_i^\ast}{I_i} \frac{I_j}{I_j^\ast}+\ln\frac{I_i^\ast I_j}{I_iI_j^\ast} +\ln\frac{I_iI_j^\ast}{I_i^\ast I_j},$$  
and substitute the expression of $\mathcal W_i$,  \Cref{CombinedLyap4} becomes
      \begin{align}\label{CombinedLyapGene2}
 \dot {\mathcal V}
 &= c_0\mu_h S_h^\ast\left(2-\frac{S_h}{S_h^\ast}-\frac{S_h^\ast}{S_h}\right)+c_{v,1}(\mu_v+\delta_v)S_v^*\left(2-\dfrac{S_v^*}{S_v}-\dfrac{S_v}{S_v^*}\right) \nonumber\\
 &+c_{v,1}aS_v^\ast\sum_{i=1}^n \frac{\beta_i I_i^\ast}{N_h}\left[
 \left(1-\frac{S_h^\ast}{S_h}+\ln\frac{S_h^\ast}{S_h}\right)+\left(1-\frac{S_h}{S_h^\ast} \frac{I_v}{I_v^\ast} \frac{E_h^\ast}{E_{h}}+\ln\frac{S_hI_vE_h^\ast}{S_h^\ast I_v^\ast E_{h}} \right)\right.\nonumber\\
 & \left.+\left(1-\frac{E_vI_v^\ast}{E_v^\ast I_v} +\ln\frac{E_vI_v^\ast}{E_v^\ast I_v} \right)
 + \left(1 -\frac{E_h}{E_h^\ast}\frac{I_1^\ast}{I_1} +\ln\frac{E_hI_1^\ast}{E_h^\ast I_1}   \right)
 +\left(1-\frac{S_vI_i E_v^\ast}{S_v^\ast I_i^\ast E_{v}} +\ln\frac{S_vI_i E_v^\ast}{S_v^\ast I_i^\ast E_{v}} \right)
 \right.\nonumber\\
  & \left.+\left(1 
  -\frac{S_v^\ast}{S_v}+\ln\frac{S_v^\ast }{S_v}\right)\right]+c_0p\pi\left(2-\frac{E_h^\ast}{E_{h}}-\frac{E_{h}}{E_h^*}  \right)
  +\sum_{i=1}^nc_ip_i\pi_h\left(2-\frac{I_i^\ast}{I_i}-\frac{I_i}{I_i^\ast}\right) \nonumber\\
  &  \sum_{i=1}^nc_i \sum_{j=1}^{n}m_{ij}I_j^\ast\left(1- \frac{I_i^\ast}{I_i} \frac{I_j}{I_j^\ast}+\ln\frac{I_i^\ast I_j}{I_iI_j^\ast}\right)\nonumber\\
 &+\underbrace{c_{v,1}aS_v^\ast\sum_{i=1}^n \frac{\beta_i I_i^\ast}{N_h} \ln\frac{ I_1}{I_1^\ast}\frac{ I_i^\ast }{I_i}   
 +\sum_{i=1}^nc_i \sum_{j=1}^{n}m_{ij}I_j^\ast \ln\frac{I_iI_j^\ast}{I_i^\ast I_j}}_{\mathcal S} 
      \end{align}

   All but the last two sums in  (\ref{CombinedLyapGene2}) are definite negative. Let us denote by $\mathcal S$ the sum of these two sums. We focus on proving that $\mathcal S:=0$. Indeed, recall the expression of $c_{v,1}$ in terms of $c_1$, given in (\ref{c0cvc1}):
$$
 c_{v,1}\frac{aS_v^\ast}{N_h}=c_1\frac{\nu_hE_h^*}{\sum_{l=1}^n  \beta_l I_l^\ast }.
$$
  By replacing $c_{v,1}$ by its value in $\mathcal S$, we obtain,

 \begin{align}\label{S}
\mathcal S &= c_{v,1}aS_v^\ast\sum_{i=1}^n \frac{\beta_i I_i^\ast}{N_h} \ln\frac{ I_1}{I_1^\ast}\frac{ I_i^\ast }{I_i}   
 +\sum_{i=1}^nc_i \sum_{j=1}^{n}m_{ij}I_j^\ast \ln\frac{I_iI_j^\ast}{I_i^\ast I_j}\nonumber\\
 %%%
 %%%%
 & =c_1\frac{\nu_hE_h^*}{\sum_{l=1}^n  \beta_l I_l^\ast }\sum_{i=1}^n  \beta_i I_i^\ast \ln\frac{ I_1}{I_1^\ast}\frac{ I_i^\ast }{I_i}   
 +\sum_{i=1}^nc_i \sum_{j=1}^{n}m_{ij}I_j^\ast \ln\frac{I_iI_j^\ast}{I_i^\ast I_j}\nonumber\\
  & =c_1\sum_{j=1}^{n}\ln\frac{I_1I_j^\ast}{I_1^\ast I_j}\left[\frac{\nu_hE_h^*}{\sum_{l=1}^n  \beta_l I_l^\ast } \beta_j I_j^\ast    
 + m_{1j}I_j^\ast \right]
 %
 %\nonumber\\
 %&
 +\sum_{i=2}^nc_i \sum_{j=1}^{n}m_{ij}I_j^\ast \ln\frac{I_iI_j^\ast}{I_i^\ast I_j} 
 \end{align}
However, since $c_i$ are the components of the solution of $Bc=0$ where $B$ is given in (\ref{B}), it follows that, for any $j\geq2,$

$$
   c_1\left( \nu_hE_h^*\frac{\beta_j I_j^\ast}{\sum_{l=1}^n \beta_l I_l^\ast} +m_{1j}I_{j}^\ast\right)=
   c_j\left(\sum_{k=1}^ n m_{jk}I_k^\ast \right)
   -\sum_{i=2}^{n}c_im_{ij}I_j^\ast,
  $$ 
Plugging this expression into \Cref{S}, and using again the properties of natural logarithms, we obtain:

\begin{align}\label{S2}
\mathcal S
& =\sum_{j=1}^{n}\ln\frac{I_1I_j^\ast}{I_1^\ast I_j}\left[c_j\left(\sum_{k=1}^ n m_{jk}I_k^\ast \right)
   -\sum_{i=2}^{n}c_im_{ij}I_j^\ast \right]
 +\sum_{i=2}^nc_i \sum_{j=1}^{n}m_{ij}I_j^\ast \ln\frac{I_iI_j^\ast}{I_i^\ast I_j}\nonumber\\
 %%%
% %
  &= \sum_{i=1}^{n}c_i\ln\frac{I_1I_i^\ast}{I_1^\ast I_i}\left(\sum_{k=1}^ n m_{ik}I_k^\ast \right)+\sum_{i=2}^nc_i \sum_{j=1}^{n}m_{ij}I_j^\ast \left[ -\ln\frac{ I_1 I_j^\ast}{I_1^\ast I_j}+\ln\frac{I_i I_j^\ast} {I_i^\ast I_j}\right]\nonumber\\
  &= \sum_{i=2}^{n}c_i\ln\frac{I_1I_i^\ast}{I_1^\ast I_i}\left(\sum_{j=1}^ n m_{ij}I_j^\ast \right)+\sum_{i=2}^nc_i \sum_{j=1}^{n}m_{ij}I_j^\ast \left[ \ln\frac{I_1^\ast I_i}{ I_1 I_i^\ast}\right]\nonumber\\
  &:=0, \end{align}
since for $i=1$, the coefficient of the sum is $\ln1=0$. Finally using \Cref{CombinedLyapGene2} and \Cref{S2}, the derivative of $\mathcal V$ along the trajectories of \Cref{ModelAD2Compact} is
   \begin{align}\label{CombinedLyapGeneFinal}
 \dot {\mathcal V}
 &= c_0\mu_h S_h^\ast\left(2-\frac{S_h}{S_h^\ast}-\frac{S_h^\ast}{S_h}\right)+c_{v,1}(\mu_v+\delta_v)S_v^*\left(2-\dfrac{S_v^*}{S_v}-\dfrac{S_v}{S_v^*}\right) \nonumber\\
 &+c_{v,1}aS_v^\ast\sum_{i=1}^n \frac{\beta_i I_i^\ast}{N_h}\left[
 \left(1-\frac{S_h^\ast}{S_h}+\ln\frac{S_h^\ast}{S_h}\right)+\left(1-\frac{S_h}{S_h^\ast} \frac{I_v}{I_v^\ast} \frac{E_h^\ast}{E_{h}}+\ln\frac{S_hI_vE_h^\ast}{S_h^\ast I_v^\ast E_{h}} \right)\right.\nonumber\\
 & \left.+\left(1-\frac{E_vI_v^\ast}{E_v^\ast I_v} +\ln\frac{E_vI_v^\ast}{E_v^\ast I_v} \right)
 + \left(1 -\frac{E_h}{E_h^\ast}\frac{I_1^\ast}{I_1} +\ln\frac{E_hI_1^\ast}{E_h^\ast I_1}   \right)
 +\left(1-\frac{S_vI_i E_v^\ast}{S_v^\ast I_i^\ast E_{v}} +\ln\frac{S_vI_i E_v^\ast}{S_v^\ast I_i^\ast E_{v}} \right)
 \right.\nonumber\\
  & \left.+\left(1 
  -\frac{S_v^\ast}{S_v}+\ln\frac{S_v^\ast }{S_v}\right)\right]+c_0p\pi\left(2-\frac{E_h^\ast}{E_{h}}-\frac{E_{h}}{E_h^*}  \right)
  +\sum_{i=1}^nc_ip_i\pi_h\left(2-\frac{I_i^\ast}{I_i}-\frac{I_i}{I_i^\ast}\right) \nonumber\\
  &  \sum_{i=1}^nc_i \sum_{j=1}^{n}m_{ij}I_j^\ast\left(1- \frac{I_i^\ast}{I_i} \frac{I_j}{I_j^\ast}+\ln\frac{I_i^\ast I_j}{I_iI_j^\ast}\right),
      \end{align}
which is definite-negative. Therefore, by Lyapunov's stability theorem, the unique endemic equilibrium is GAS.

\end{proof}

\begin{theorem}\label{WEEGAS}
Let $({S}_h^\diamond, 0,\mathbf{I}_h^\diamond,{S}_v^0,0,0)$ be a weakly endemic equilibrium of Model (\ref{ModelAD2Compact}). This equilibrium is GAS.
\end{theorem}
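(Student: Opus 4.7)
My plan is to adapt the Lyapunov-function construction of \Cref{GASEEAD} to the weakly endemic equilibrium, accounting for the fact that $E_h^\diamond = E_v^\diamond = I_v^\diamond = 0$ and for the characterising identity $\bigl\langle \Beta\,\mbox{\Large $\mid$}\,\mathbf{I}_h^\diamond\bigr\rangle = 0$. Wherever a Volterra-type integral $\int_{a^\ast}^{a}(1-a^\ast/x)\,dx$ would appear with $a^\ast = 0$, I would simply replace it by the linear term $a$. Concretely, I would set
\begin{equation*}
\mathcal V = c_0\int_{S_h^\diamond}^{S_h}\left(1-\frac{S_h^\diamond}{x}\right)dx + c_0 E_h + \sum_{i=1}^n c_i\int_{I_i^\diamond}^{I_i}\left(1-\frac{I_i^\diamond}{x}\right)dx + c_v\int_{S_v^0}^{S_v}\left(1-\frac{S_v^0}{x}\right)dx + c_v E_v + c_v\frac{\alpha_v}{\nu_v}I_v,
\end{equation*}
where $c_0,c_v>0$ are tuned so that the linear $I_v$ and $E_v$ cross-terms cancel in the spirit of relation~(\ref{c0cvc1}), and $(c_1,\ldots,c_n)^T\gg 0$ is chosen again as the Kirchhoff-type kernel of an irreducible matrix built from $M$ and $\mathbf{I}_h^\diamond$.

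\textbf{Computing the derivative.} Differentiating along \Cref{ModelAD2Compact} and using $p=0$ together with the endemic relations $\alpha_i I_i^\diamond - \sum_j m_{ij}I_j^\diamond = p_i\pi_h$, I would collect $\dot{\mathcal V}$ into four groups: (i) susceptible contributions $c_0\mu_h S_h^\diamond(2 - S_h/S_h^\diamond - S_h^\diamond/S_h)$ and $-c_v(\mu_v+\delta_v)(S_v - S_v^0)^2/S_v$, both $\leq 0$; (ii) immigration contributions $\sum_i c_i p_i \pi_h(2 - I_i^\diamond/I_i - I_i/I_i^\diamond)\leq 0$; (iii) stage-transfer cross-terms $c_i m_{ij}I_j^\diamond\bigl(1 - I_i^\diamond I_j/(I_i I_j^\diamond)\bigr)$ that telescope through the logarithmic identity once $c$ is the Kirchhoff kernel, exactly as the identity $\mathcal S = 0$ of \Cref{GASEEAD}; and (iv) a vector-coupling residual whose net coefficient in front of $I_v$ collapses, after invoking $\bigl\langle\Beta\,\mbox{\Large $\mid$}\,\mathbf{I}_h^\diamond\bigr\rangle = 0$ and matching against the expression of $\mathcal N_0^2(p,\mathbf{p},p_{n+1})$ in \Cref{ExistenceEEAD}, to a multiple of $1 - \mathcal N_0^2(p,\mathbf{p},p_{n+1})$, which is $\leq 0$ by hypothesis.

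\textbf{Conclusion via LaSalle.} Since $\dot{\mathcal V}$ may be only negative semidefinite (critically so when $\mathcal N_0^2(p,\mathbf{p},p_{n+1}) = 1$), I would finish by LaSalle's invariance principle on the compact invariant set $\Omega$. On the largest invariant subset of $\{\dot{\mathcal V}=0\}$, the susceptible equalities pin $S_h = S_h^\diamond$ and $S_v = S_v^0$, the vanishing $1 - x + \ln x$ residues force $E_h = 0$, $E_v = 0$, and $I_v = 0$, and the remaining $\mathbf{I}_h$-dynamics reduce to $\dot{\mathbf{I}}_h = \pi_h\mathbf{p} - (\textrm{diag}(\alpha)-M)\mathbf{I}_h$. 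Since $\textrm{diag}(\alpha)-M$ is a non-singular M-matrix, $\mathbf{I}_h^\diamond$ is its unique equilibrium, so LaSalle yields global attraction; combined with Lyapunov stability this gives GAS.

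\textbf{Main obstacle.} The delicate step is redoing the Kirchhoff weighting in the present degenerate setting: with $E_h^\diamond = 0$ and $\sum_l\beta_l I_l^\diamond = 0$, the normalisation $c_v aS_v^\ast/N_h = c_1 \nu_h E_h^\ast/\sum_l\beta_l I_l^\ast$ of \Cref{GASEEAD} is ill-defined and the matrix $B$ of \Cref{GASEEAD} degenerates in its first column. I would handle this by rebalancing $c_v$ directly through the residual relations $\alpha_i I_i^\diamond = p_i\pi_h + \sum_j m_{ij}I_j^\diamond$ and by exploiting the non-zero forward-progression entries $\gamma_{i,i+1}$ of $G$, together with the non-vanishing components of $\mathbf{p}$, to re-establish irreducibility of the reduced coefficient matrix. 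Once a strictly positive weight vector $c$ is secured, the logarithmic telescoping and the matching with $\mathcal N_0^2(p,\mathbf{p},p_{n+1})$ proceed essentially verbatim from \Cref{GASEEAD}.
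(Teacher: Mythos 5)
Your overall strategy --- a hybrid Lyapunov function with Volterra terms for the positive components and linear terms for the vanishing ones, Kirchhoff-type weights, logarithmic telescoping, and the condition $\mathcal N_0^2(p,\mathbf{p},p_{n+1})\leq1$ --- is the same as the paper's, and the peripheral pieces of your sketch (susceptible blocks, immigration blocks, LaSalle) would go through. The genuine gap sits exactly at what you call the ``main obstacle,'' and your proposed fix points in the wrong direction. When $\Beta\neq0$ and $\Beta^T(\textrm{diag}(\alpha)-M)^{-1}\mathbf{p}=0$, the transfer structure is \emph{necessarily reducible}: the paper first deduces that the indices split into $\mathcal J=\{1,\dots,s-1\}$ with $I_i^\diamond=0$ and $p_i=0$ for $i\in\mathcal J$, $I_i^\diamond>0$ and $\beta_i=0$ for $i\geq s$, and that $M$ is block triangular with vanishing $(s-1)\times(n-s+1)$ block (no transfer from the replenished, non-transmitting stages into the stages that can infect vectors). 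Trying to ``re-establish irreducibility of the reduced coefficient matrix'' via the entries $\gamma_{i,i+1}$ cannot work: the full $n\times n$ Kirchhoff matrix of \Cref{GASEEAD} genuinely degenerates, since the rows indexed by $\mathcal J$ are proportional to $I_i^\diamond=0$ and the normalization divides by $\sum_l\beta_lI_l^\diamond=0$, and no parameter choice consistent with case \Cref{iv} of \Cref{ExistenceEEAD} restores this.

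What replaces it in the paper is a two-tier weight construction that your proposal does not contain. The Kirchhoff kernel is computed only on the positive block: $\mathbf{c}_2=(c_s,\dots,c_n)^T\gg0$ solves $\tilde B\mathbf{c}_2=0$ for the Laplacian $\tilde B$ built from $M_{22}$ and $I_s^\diamond,\dots,I_n^\diamond$, which is irreducible because $M_{22}$ is. The weights on the zero block are then obtained from a \emph{separate} linear system $(-\textrm{diag}(\tilde\alpha)+M_{11}^T+\bar\Beta\tilde e_1^T)\mathbf{c}_1=-M_{21}^T\mathbf{c}_2$, and the solvability of this system with $\mathbf{c}_1\geq0$ --- i.e.\ the stability of the Metzler matrix $-\textrm{diag}(\tilde\alpha)+M_{11}^T+\bar\Beta\tilde e_1^T$ --- is where $\mathcal N_0^2\leq1$ actually enters; it does not appear as ``a multiple of $1-\mathcal N_0^2$ in front of $I_v$.'' Two smaller points: the subcase $\Beta=0$, which also yields a weakly endemic equilibrium, needs no Lyapunov function at all (the vector equations decouple and the host subsystem is linear and triangular, so asymptotic autonomy finishes it); and on the LaSalle set the variables $E_h$, $E_v$, $I_v$ enter $\mathcal V$ linearly rather than through $1-x+\ln x$ residues, so they are forced to zero by the invariance constraints coming from the other terms, not by vanishing log-residues.
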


\begin{proof}
Let $({S}_h^\diamond, 0,\mathbf{I}_h^\diamond,{S}_v^0,0,0)$ be a weakly endemic equilibrium of Model (\ref{ModelAD2Compact}). \\

$\bullet$
 If $\Beta=0$, we remark from the vector's equations in Model (\ref{ModelAD2Compact}) that $S_v\to S_v^0:=\frac{\pi_v}{\mu_v+\delta_v}$, $E_v\to0$ and $I_v\to0$ as $t\to\infty$. So, by the theory of asymptotically autonomous systems for triangular systems \cite{CasThi95,0478.93044}, Model (\ref{ModelAD2Compact}) is equivalent to
\begin{equation} 
\left\{\begin{array}{llll}\label{WeakEE}
\displaystyle\dot S_{h}=\pi_h\left( 1-p-\sum_{i=1}^{n+1}p_i \right)-\mu_h S_h\\
\displaystyle\dot E_{h}=p\pi_h-(\mu_h+\nu_h+\eta)E_h\\
\displaystyle\dot{\mathbf{I}}_h=\pi_h\mathbf{p}+\nu_hE_he_1-(\textrm{diag}(\alpha)-M)\mathbf{I}_h
\end{array}\right.
\end{equation}
System (\ref{WeakEE}) is triangular and linear, and its solutions converge toward $(\bar S_h,\bar E_h,\bar{\mathbf{I}}_h)$, where $\bar S_h:=\frac{\Lambda_h}{\mu_h}$, $\bar E_h:=\frac{p\pi_h}{\mu_h+\nu_h+\eta}$ and $\bar{\mathbf{I}}_h:=(\textrm{diag}(\alpha)-M)^{-1}\left( \pi_h\mathbf{p}+\frac{\nu_hp\pi_h}{\mu_h+\nu_h+\eta}e_1 \right)$. Thus, it follows that the weak endemic equilibrium $(\bar S_h,\bar E_h,\bar{\mathbf{I}}_h,S_v^0,0,0)$ of Model (\ref{ModelAD2Compact}) is GAS.\\

Before we start the proof of the next case, let us define the order relation for the vectors as follows: $u\leq v$ if $u_i\leq v_i$, for all $i$, where  $u_i$ and $v_i$ are components of $u$ and $v$ respectively. Similarly, $u< v$ if $u\leq v$ and $u\neq v$. Also $u\gg v$ if $u_i>v_i$, for all $i$.\\

$\bullet$ If \Cref{iv} of \Cref{ExistenceEEAD} is satisfied with $\mathcal N_0^2(p,\mathbf{p},p_{n+1})\leq1$. That is,  
$\Beta\neq0$, $p=0$ and $\mathbf{p}$ and $M$ are such that $\bigl\langle
      \Beta   
 \, \mbox{\Large $\mid$} \, (\textrm{diag}(\alpha)-M)^{-1}\mathbf{p}\bigr\rangle=0$.
These imply that, using the endemic relations, \[(\textrm{diag}(\alpha)-M)^{-1} \mathbf{p} >0.% \quad\textrm{and}\quad (\textrm{diag}(\alpha)-M)^{-1} \mathbf{p} \ngg0.
\]
Moreover, it follows that $\bigl\langle
      \Beta   
 \, \mbox{\Large $\mid$} \, \mathbf{I}_h^\diamond  
\bigr\rangle=0$. This implies that it exists a subset $\mathcal J$ of $\{1,2,\dots,n\}$ such that $I^\diamond_i=0$, for all $i\in\mathcal J$, $I_i^\diamond>0$ for $i\in\{1,2,\dots,n\}\setminus\mathcal J$; and $\beta_i^\diamond>0$ for $i\in\{1,2,\dots,n\}\setminus\mathcal J$, and $\beta_i^\diamond=0$ for $i\in\mathcal J$. WLOG, suppose that $\mathcal J=\{1,2,\dots,s-1\}$ with $s\geq2$. Hence, the endemic relation 
  $\mathbf{I}_h^\diamond=\pi_h(\textrm{diag}(\alpha)-M)^{-1}\mathbf{p}$ and the condition $\bigl\langle
      \Beta   
 \, \mbox{\Large $\mid$} \, (\textrm{diag}(\alpha)-M)^{-1}\mathbf{p}\bigr\rangle=0$ imply that $M$ has the form 
 $$
 M=\begin{pmatrix}
 M_{11} & \textbf{0}_{s-1,n-s+1}\\
 M_{21} & M_{22}
 \end{pmatrix},
 $$
where  $M_{11}\in\mathcal M_{s-1,s-1},\;\; M_{21}\in\mathcal M_{n-s+1,s-1},\;\; M_{22}\in\mathcal M_{n-s+1,n-s+1}$. Similarly, $p_i=0$ for all $i\in\mathcal J$ and 
$\mathbf{I}_h^\diamond=(0,\dots,0, I_s^\diamond,\dots,I_n^\diamond )$ where $I_i^\diamond>0$ for $s\leq i\leq n$. \\

 Let $\mathbf{c}=(\mathbf{c}_1,\mathbf{c}_2)^T$ where $\mathbf{c}_1=(c_1,\dots,c_{s-1})^T$ and $\mathbf{c}_2=(c_s,\dots,c_{n})^T$. The vector $\mathbf{c}_2$  is the solution of $\tilde B\mathbf{c}_2=0$ where
 $$
 \tilde B=\begin{pmatrix}
 -\tilde{b}_{s,s} & m_{s+1,s}I_s^\diamond &  m_{s+2,s}I_{s}^\diamond & \dots & \dots & m_{n,s}I_{s}^\diamond\\
 m_{s,s+1}I_{s+1}^\diamond   & -\tilde{b}_{s+1,s+1} &  m_{s+2,s+1}I_{s+1}^\diamond & \dots &\dots & m_{n,s+1}I_{s+1}^\diamond\\
  m_{s,s+2}I_{s+2}^\diamond &   m_{s+1,s+2}I_{s+2}^\diamond & -\tilde{b}_{33} & \dots& \dots&    m_{n,s+2}I_{s+2}^\diamond\\
 \vdots& \vdots & \vdots& \ddots & \ddots& \vdots \\
m_{s,n}I_{n}^\diamond & m_{s+1,n}I_{n}^\diamond & m_{s+2,n}I_{n}^\diamond  & \hdots&\hdots &-\tilde{b}_{nn}
\end{pmatrix},
 $$
 where $\tilde{b}_{kk}=\sum_{j=s}^nm_{kj}I_j^\diamond$ for $s\leq k\leq n$.
 Since $M_{22}$ is irreducible and $I^\diamond_i>0$ for all $s\leq i\leq n$, the matrix $\tilde B$ is irreducible. Moreover, $\tilde B$ is the Laplacian matrix of the graph interconnecting the stages $I_i$ for $s\leq i\leq n$. Hence, as previously stated, Kirchhoff's matrix tree theorem affirms that the solution of $\tilde B\textbf{c}_2=0$ is such that $c_i=-C_{ii}\gg0$, where $C_{ii}$ is the  cofactor of $i^{th}$ diagonal element of $\tilde B$. Hence $\textbf{c}_2\gg0$.

Let $\mathbf{I}_h^\diamond=(\mathbf{I}_1^\diamond,\mathbf{I}_2^\diamond)^T$ and consider the Lyapunov function candidate $ \mathcal V= \mathcal V_h+ \mathcal V_v$, where
%%\begin{eqnarray*}
 $$ \mathcal V_h=c_1 \frac{\nu_h}{\alpha_h} E_h+\bigl\langle\textbf{c}_1   \, \mbox{\Large $\mid$} \, \mathbf{I}_1  \bigr\rangle
  +\sum_{i=s}^nc_{i}\int_{I_{i}^\diamond}^{I_{i}}\left(1-\frac{I_{i}^\diamond}{x}\right)dx,\quad\textrm{and}\quad  \mathcal V_v=c_vE_v+c_v\frac{\alpha_v}{\nu_v}I_v,
  $$%\end{eqnarray*}
  where $c_v=c_1\frac{\nu_h}{\alpha_h}\frac{a\beta_{vh}\Lambda_h}{\mu_hN_h}\frac{\nu_v}{(\mu_v+\delta_v)\alpha_v}$, $\textbf{c}_1$ is positive vector to be determined later. %Note that the endemic relations imply that 
  %$$\mathbf{I}_h^\diamond=\pi_h(\textrm{diag}(\alpha)-M)^{-1}\mathbf{p}.$$
  %Thus, it follows that $m_{ij}=0$ for all $1\leq i\leq s-1$ and $s\leq j\leq n$. 
  The derivative of $\mathcal V$ along the trajectories of System (\ref{ModelAD2Compact}) is:
    \begin{eqnarray}\label{LyapBoundary}
  \dot {\mathcal V_h} &=&c_1 \frac{\nu_h}{\alpha_h}\dot{E}_h+\bigl\langle\textbf{c}_1   \, \mbox{\Large $\mid$} \, \dot{\mathbf{I}}_1^\diamond  \bigr\rangle
  +\sum_{i=s}^nc_i\left(1-\frac{I_i^\diamond}{I_i}\right)\dot I_i+c_v\dot E_v+c_v\frac{\alpha_v}{\nu_v}\dot I_v\nonumber\\
  &=&c_1 a\,\beta_{vh}\frac{\nu_h}{\alpha_h}\dfrac{S_{h}I_{v}}{N_{h}}+\bigl\langle\textbf{c}_1   \, \mbox{\Large $\mid$} \, (-\textrm{diag}(\tilde\alpha)+M_{11})\mathbf{I}_1  \bigr\rangle+\sum_{i=s}^nc_i\left(1-\frac{I_i^\diamond}{I_i}\right)\dot I_i,%\nonumber\\
  \end{eqnarray}
  where $\tilde\alpha=(\alpha_1,\dots,\alpha_{s-1})$.
  Moreover, as in the proof of \Cref{GASEEAD}, using the fact, for  that, for $1\leq i\leq n$, the $c_i$ are the components of the solution of $\tilde B\textbf{c}_2=0$ and 
  $$
   \sum_{i=s}^nc_i\sum_{j=s}^nm_{ij}I_j^\diamond\left(1 - \frac{I_i^\diamond}{I_i}\frac{I_j}{I_j^\diamond}\right)= 
   \sum_{i=s}^nc_i\sum_{j=s}^nm_{ij}I_j^\diamond\left(1 - \frac{I_i^\diamond}{I_i}\frac{I_j}{I_j^\diamond}+ \ln\frac{I_i^\diamond}{I_i}\frac{I_j}{I_j^\diamond}\right),
  $$
    it could be shown that Equation (\ref{LyapBoundary}) implies that 
    
       \begin{eqnarray}\label{ciM222}\sum_{i=s}^nc_i\left(1-\frac{I_i^\ast}{I_i}\right)\dot I_i
           &=&\sum_{i=s}^nc_ip_i\pi_h\left(2-\frac{I_i^\diamond}{I_i} -\frac{I_i}{I_i^\diamond}\right)+\sum_{i=s}^nc_i\sum_{j=s}^nm_{ij}I_j^\diamond\left(1 - \frac{I_i^\diamond}{I_i}\frac{I_j}{I_j^\diamond}+\ln\frac{I_i^\diamond}{I_i}\frac{I_j}{I_j^\diamond}\right)\nonumber\\
           & &- \sum_{i=s}^{n}c_i \frac{I_i^\diamond}{I_i}\sum_{i=1}^{s-1}m_{ij}I_j  
            +\sum_{i=s}^nc_i \sum_{j=1}^{s-1}m_{ij}I_j   \nonumber\\
            &:=&\sum_{i=s}^nc_ip_i\pi_h\left(2-\frac{I_i^\ast}{I_i} -\frac{I_i}{I_i^*}\right)+\sum_{i=s}^nc_i\sum_{j=s}^nm_{ij}I_j^\diamond\left(1 - \frac{I_i^\diamond}{I_i}\frac{I_j}{I_j^\diamond}+\ln\frac{I_i^\diamond}{I_i}\frac{I_j}{I_j^\diamond}\right)\nonumber\\
           & &- \sum_{i=s}^{n}c_i \frac{I_i^\diamond}{I_i}\sum_{i=1}^{s-1}m_{ij}I_j +\textbf{c}_2^TM_{21}\mathbf{I}_1.
    \end{eqnarray}
  We choose $\mathbf{c}_1$ to be the solution of $(-\textrm{diag}(\tilde\alpha)+M_{11}^T+\bar\Beta \tilde{e}_1^T)\mathbf{c}_1=-M_{21}^T\mathbf{c}_2$, where $\bar\Beta=\frac{a^2\beta_{vh}\Lambda_h}{\mu_hN_h}\frac{\nu_v\nu_h}{(\mu_v+\delta_v)\alpha_v\alpha_h}\frac{N_v}{N_h}\tilde\Beta$, with $\tilde\Beta=(\beta_1,\dots,\beta_{s-1})$ and $\tilde{e}_1$ the fist canonical vector of $\R^{s-1}$. This solution exists and $\mathbf{c}_1\geq0$ since $\mathbf{c}_2\gg0$ and $-(-\textrm{diag}(\tilde\alpha)+M_{11}^T+\bar\Beta \tilde{e}_1^T)^{-1}\geq0$ as $-\textrm{diag}(\tilde\alpha)+M_{11}^T+\bar\Beta \tilde{e}_1^T$ is a Metzler invertible matrix. \\
  
  Hence, Equations (\ref{LyapBoundary}) and (\ref{ciM222}) leads to
      \begin{eqnarray}\label{LyapBoundary2}
  \dot {\mathcal V_h} 
  &=& c_1 a\,\beta_{vh}\frac{\nu_h}{\alpha_h}\dfrac{S_{h}I_{v}}{N_{h}}+\bigl\langle\textbf{c}_1   \, \mbox{\Large $\mid$} \, (-\textrm{diag}(\tilde\alpha)+M_{11})\mathbf{I}_1  \bigr\rangle +\sum_{i=s}^nc_ip_i\pi_h\left(2-\frac{I_i^\diamond}{I_i} -\frac{I_i}{I_i^\diamond}\right)\nonumber\\
  & &+\sum_{i=s}^nc_i\sum_{j=s}^nm_{ij}I_j^\diamond\left(1 - \frac{I_i^\diamond}{I_i}\frac{I_j}{I_j^\diamond}+\ln\frac{I_i^\diamond}{I_i}\frac{I_j}{I_j^\diamond}\right) 
  - \sum_{i=s}^{n}c_i \frac{I_i^\diamond}{I_i}\sum_{i=1}^{s-1}m_{ij}I_j +\textbf{c}_2^TM_{21}\mathbf{I}_1
  \end{eqnarray}
    However,
    \begin{eqnarray}\bigl\langle\textbf{c}_1   \, \mbox{\Large $\mid$} \, (-\textrm{diag}(\tilde\alpha)+M_{11})\mathbf{I}_1  \bigr\rangle+\textbf{c}_2^TM_{21}\mathbf{I}_1&=&\bigl\langle(-\textrm{diag}(\tilde\alpha)+M_{11}^T)\textbf{c}_1 +M_{21}^T \textbf{c}_2 \, \mbox{\Large $\mid$} \, \mathbf{I}_1  \bigr\rangle\nonumber\\
    &=&-\bigl\langle\bar\Beta \tilde{e}_1^T\textbf{c}_1 \, \mbox{\Large $\mid$} \, \mathbf{I}_1  \bigr\rangle\nonumber\\
    &:=&-c_1\frac{a^2\beta_{vh}\Lambda_h}{\mu_hN_h}\frac{\nu_v\nu_h}{(\mu_v+\delta_v)\alpha_v\alpha_h}\frac{N_v}{N_h}\bigl\langle\tilde\Beta \, \mbox{\Large $\mid$} \, \mathbf{I}_1  \bigr\rangle.\nonumber
    \end{eqnarray}
 Hence, Equation (\ref{LyapBoundary2}) leads to   
        \begin{eqnarray}\label{LyapBoundary3}
  \dot {\mathcal V_h} 
  &=& c_1 a\,\beta_{vh}\frac{\nu_h}{\alpha_h}\dfrac{S_{h}I_{v}}{N_{h}}+\sum_{i=s}^nc_ip_i\pi_h\left(2-\frac{I_i^\diamond}{I_i} -\frac{I_i}{I_i^\diamond}\right)   -c_1\frac{a^2\beta_{vh}\Lambda_h}{\mu_hN_h}\frac{\nu_v\nu_h}{(\mu_v+\delta_v)\alpha_v\alpha_h}\frac{N_v}{N_h}\bigl\langle\tilde\Beta \, \mbox{\Large $\mid$} \, \mathbf{I}_1  \bigr\rangle\nonumber\\
  & &+\sum_{i=s}^nc_i\sum_{j=s}^nm_{ij}I_j^\diamond\left(1 - \frac{I_i^\diamond}{I_i}\frac{I_j}{I_j^\diamond}+\ln\frac{I_i^\diamond}{I_i}\frac{I_j}{I_j^\diamond}\right) 
  - \sum_{i=s}^{n}c_i \frac{I_i^\diamond}{I_i}\sum_{i=1}^{s-1}m_{ij}I_j.  
  \end{eqnarray}
We can check that derivative of $\mathcal V_v$ along the trajectories of (\ref{ModelAD2Compact})
      \begin{eqnarray}\label{LyapVectBoundary}
  \dot {\mathcal V_v} &=&c_v\dot E_v+c_v\frac{\alpha_v}{\nu_v}\dot I_v\nonumber\\
 &=&c_v\left(a\dfrac{ S_{v}}{N_h}\bigl\langle
      \Beta    
 \, \mbox{\Large $\mid$} \, \mathbf{I}_h   
\bigr\rangle-\alpha_vE_{v}\right)
+c_v\frac{\alpha_v}{\nu_v}\left(\nu_vE_v-(\mu_v+\delta_v)I_{v}\right)\nonumber\\
&=&c_1a^2\frac{\beta_{vh}\Lambda_h}{\mu_hN_h}\frac{\nu_v\nu_h}{(\mu_v+\delta_v)\alpha_v\alpha_h} \dfrac{ S_{v}}{N_h}\bigl\langle
      \Beta    
 \, \mbox{\Large $\mid$} \, \mathbf{I}_h   
\bigr\rangle  
-c_1\frac{\nu_h}{\alpha_h}\frac{a\beta_{vh}\Lambda_h}{\mu_hN_h}I_{v} \nonumber\\
&=&c_1a^2\frac{\beta_{vh}\Lambda_h}{\mu_hN_h}\frac{\nu_v\nu_h}{(\mu_v+\delta_v)\alpha_v\alpha_h} \dfrac{ S_{v}}{N_h}\bigl\langle
      \tilde\Beta    
 \, \mbox{\Large $\mid$} \, \mathbf{I}_1  
\bigr\rangle  
-c_1\frac{\nu_h}{\alpha_h}\frac{a\beta_{vh}\Lambda_h}{\mu_hN_h}I_{v},
  \end{eqnarray}
  since $\bigl\langle
      \Beta    
 \, \mbox{\Large $\mid$} \, \mathbf{I}_h   
\bigr\rangle =\bigl\langle
      \tilde\Beta    
 \, \mbox{\Large $\mid$} \, \mathbf{I}_1   
\bigr\rangle$.
Finally, the derivative of $\mathcal V=\mathcal V_h+\mathcal V_v$ along the trajectories of (\ref{ModelAD2Compact}) is obtained by combining Equation (\ref{LyapBoundary3}) and Equation (\ref{LyapVectBoundary}) as follows:
   \begin{eqnarray}\label{CombinedBoundary}
   \dot {\mathcal V}
&=&c_1 a\,\beta_{vh}\frac{\nu_h}{\alpha_h}\dfrac{1}{N_{h}}\left(S_{h}-\frac{\Lambda_h}{\mu_h}  \right)I_{v}   -c_1\frac{a^2\beta_{vh}\Lambda_h}{\mu_hN_h}\frac{\nu_v\nu_h}{(\mu_v+\delta_v)\alpha_v\alpha_h}\frac{1}{N_h}\left(S_v-N_v  \right)\bigl\langle\tilde\Beta \, \mbox{\Large $\mid$} \, \mathbf{I}_1  \bigr\rangle\nonumber\\
  & &+\sum_{i=s}^nc_ip_i\pi_h\left(2-\frac{I_i^\diamond}{I_i} -\frac{I_i}{I_i^\diamond}\right)+\sum_{i=s}^nc_i\sum_{j=s}^nm_{ij}I_j^\diamond\left(1 - \frac{I_i^\diamond}{I_i}\frac{I_j}{I_j^\diamond}+\ln\frac{I_i^\diamond}{I_i}\frac{I_j}{I_j^\diamond}\right) 
  - \sum_{i=s}^{n}c_i \frac{I_i^\diamond}{I_i}\sum_{i=1}^{s-1}m_{ij}I_j.\nonumber
\end{eqnarray}
%\vspace{2cm}
  Moreover, using the equation of $\dot S_h$ and $\dot S_v$ in Model (\ref{ModelAD2Compact}), it is straightforward that $S_h\leq\frac{\Lambda_h}{\mu_h}$ and $S_v\leq N_v:=\frac{\pi_v}{\mu_v+\delta_v}$, where $\displaystyle\Lambda_h=\pi_h\left( 1-p-\sum_{i=1}^{n+1}p_i \right)$. Hence $\dot{\mathcal V}\leq0$. Therefore, by Lyapunov's theorem this proves the stability of the weakly endemic equilibrium $({S}_h^\diamond, 0,\mathbf{I}_h^\diamond,{S}_v^0,0,0)$. Furthermore, $\dot{\mathcal V}$ is the sum of five nonpositive terms, of which two are definite-negative. Hence, it is straightforward that the largest invariant on which $\dot{\mathcal V}=0$ is reduced to $({S}_h^\diamond, 0,\mathbf{I}_h^\diamond,{S}_v^0,0,0)$. Thus, by LaSalle's principle,  $({S}_h^\diamond, 0,\mathbf{I}_h^\diamond,{S}_v^0,0,0)$ is asymptotically stable. This completes the proof of the global asymptotic stability of the weakly endemic equilibrium $({S}_h^\diamond, 0,\mathbf{I}_h^\diamond,{S}_v^0,0,0)$.
\end{proof}

Per \Cref{ExistenceEEAD}, \Cref{iv}, a necessary condition to break the host-vector transmission, that is, to maintain the vectors disease-free, is $p=0$ and $\bigl\langle
      \Beta   
 \, \mbox{\Large $\mid$} \, (\textrm{diag}(\alpha)-M)^{-1}\mathbf{p}\bigr\rangle=0$. The later quantity has an epidemiological interpretation. Indeed, it means that: a.) there is an influx of infected individuals only to a subset of indices and that the hosts in these stages are unable to infect the vectors and b.) the infectious hosts at these stages do not ``ameliorate" their infectiosity to stages in the complement of the subset in which they belong. That is, $\delta_{ij}=0$ for all $i\in A$ and $j\in\{1,2,\dots,n\}\setminus A$, with $p_j>0$ for all $j\in A$ and $p_j=0$ otherwise. In this case, the threshold $\mathcal N_0^2(p,\mathbf{p},p_{n+1})$ determine whether or not the vector populations become disease-free. If $\mathcal N_0^2(p,\mathbf{p},p_{n+1})<1$, the disease dies out in the vector population and it thus, the infectious hosts are contained only into the classes in which they are replenished.  This threshold captures the capacity of hosts in stage $A$ to maintain the disease in the vector population. Indeed, we can show that:
\begin{eqnarray*}
 \mathcal N_0^2(p,\mathbf{p},p_{n+1})&=&\frac{a^2\beta_{vh}\nu_v\nu_h}{\alpha_v(\mu_v+\delta_v)\alpha_h\mu_h}\frac{N_v}{N_h}\frac{\pi_h\left( 1-p-\sum_{i=1}^{n+1}p_i \right)}{N_h}\bigl\langle
      \Beta   
 \, \mbox{\Large $\mid$} \, (\textrm{diag}(\alpha)-M)^{-1} e_1  
\bigr\rangle\\
&:=&\frac{a^2\beta_{vh}\nu_v\nu_h}{\alpha_v(\mu_v+\delta_v)\alpha_h\mu_h}\frac{N_v}{N_h}\frac{\pi_h\left( 1-p-\sum_{i=1}^{n+1}p_i \right)}{N_h}\bigl\langle
      \tilde\Beta   
 \, \mbox{\Large $\mid$} \, (\textrm{diag}(\tilde\alpha)-M_{11})^{-1} e_1  
\bigr\rangle
\end{eqnarray*}
\subsection{Sharp threshold property}\label{SharpThreshold}
In this subsection, we investigate the dynamics of Model (\ref{ModelAD2Compact}) when $p=p_1=\dots,p_n=0$. In this case, we obtain the model
\begin{equation} \label{ModelAD2CompactSharp}
\left\{\begin{array}{llll}%%%%\bea\label{2PNI}
\displaystyle\dot S_{h}=\pi_h-a\,\beta_{vh}\,S_{h}\,\dfrac{I_{v}}{N_{h}}-\mu_h S_h\\
\displaystyle\dot E_{h}= a\,\beta_{vh}\,S_{h}\,\dfrac{I_{v}}{N_{h}}-(\mu_h+\nu_h+\eta)E_h\\
\displaystyle\dot{\mathbf{I}}_h=\nu_hE_he_1-(\textrm{diag}(\alpha)-M)\mathbf{I}_h\\
\displaystyle\dot S_{v}=\Lambda_v
 -a\dfrac{ S_{v}}{N_h}\bigl\langle
      \Beta^T   
 \, \mbox{\Large $\mid$} \, \mathbf{I}_h   
\bigr\rangle
-(\mu_v+\delta_v)S_{v}\\
\displaystyle\dot E_{v}=a\dfrac{ S_{v}}{N_h}\bigl\langle
      \Beta^T   
 \, \mbox{\Large $\mid$} \, \mathbf{I}_h   
\bigr\rangle-(\mu_v+\nu_v+\delta_v)E_{v}\\
\displaystyle\dot I_{v}=\nu_vE_v-(\mu_v+\delta_v)I_{v}
\end{array}\right.
\end{equation}
For the same reason evoked in \Cref{sec:SPHostFlux}, the solutions of System (\ref{ModelAD2CompactSharp}) stay positive and bounded. Unlike in Model (\ref{ModelAD2Compact}),  the Model (\ref{ModelAD2CompactSharp}) has a disease free equilibrium (DFE), and is given by $(S_h^0,0,0,S_v^0,0,0)$ with  $S_h^0=\frac{\pi_h}{\mu_h}$ and $S_v^0=\frac{\pi_v}{\mu_v+\delta_v}$.

The basic reproduction number $\mathcal R_0^2$ is derived using the next generation method. An explicit expression of it is given by
\begin{eqnarray*}
\mathcal R_0^2&=&\frac{a^2\beta_{vh}\nu_h\nu_vN_v}{(\mu_h+\nu_h+\eta)(\nu_v+\mu_v+\delta_v)(\nu_v+\delta_v)N_h}\Beta^T(\textrm{diag}(\alpha)-M)^{-1}e_1\\
&:=&\mathcal N_0^2(0,\mathbf{0},0).
\end{eqnarray*}
 Note that since the matrix is $M$ is Metzler (off-diagonal elements are non-negative) and invertible, we have $-M^{-1}\geq0$. Thus, $\mathcal R_0^2\geq0$. The following theorem gives the complete asymptotic behavior of Model (\ref{ModelAD2CompactSharp}).
\begin{theorem}\label{theo:SharpThreshold}\hfill
\begin{enumerate}
\item If $\mathcal R_0^2\leq1$, the DFE is globally asymptotically stable.
\item If $\mathcal R_0^2>1$, the DFE is unstable and a unique endemic equilibrium exists and is GAS.
\end{enumerate}
\end{theorem}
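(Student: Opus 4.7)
The plan is to deduce the theorem from \Cref{ExistenceEEAD}, \Cref{GASEEAD}, and a dedicated Lyapunov argument for the disease-free equilibrium. Model (\ref{ModelAD2CompactSharp}) is precisely Model (\ref{ModelAD2Compact}) specialized to $p=0$, $\mathbf{p}=\mathbf{0}$, $p_{n+1}=0$, so the existence claims fall directly out of \Cref{ExistenceEEAD}. In that specialization the inner product $\bigl\langle\Beta\mid(\textrm{diag}(\alpha)-M)^{-1}\mathbf{p}\bigr\rangle$ is trivially zero and $\mathcal N_0^2(0,\mathbf{0},0)=\mathcal R_0^2$, placing us in case \Cref{iv}. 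The ``weakly endemic equilibrium'' there collapses to the DFE since $\mathbf{I}_h^\diamond=\pi_h(\textrm{diag}(\alpha)-M)^{-1}\mathbf{0}=\mathbf{0}$, while the strongly endemic equilibrium delivered by the sub-case $\mathcal N_0^2>1$ is exactly the unique interior equilibrium claimed when $\mathcal R_0^2>1$.

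For the global stability of the DFE when $\mathcal R_0^2\leq1$, I would build a linear Lyapunov function
\[
\mathcal V_0=\omega_E E_h+\bigl\langle\boldsymbol{\omega}\mid\mathbf{I}_h\bigr\rangle+\omega_{vE} E_v+\omega_{vI} I_v,
\]
with strictly positive weights chosen as a left Perron eigenvector of the next-generation matrix $FV^{-1}$ used to compute $\mathcal R_0^2$. Such a vector exists because $\textrm{diag}(\alpha)-M$ is a nonsingular M-matrix and the vector block is diagonally dominant, so $F-V$ is an irreducible Metzler matrix. On the positively invariant set $\Omega$ one has $S_h\leq S_h^0$ and $S_v\leq S_v^0$, so the bilinear incidence terms are dominated by their DFE-linearized counterparts; combined with the eigenvector identity $\boldsymbol{\omega}^\top F=\mathcal R_0^2\,\boldsymbol{\omega}^\top V$ this yields
\[
\dot{\mathcal V}_0\leq(\mathcal R_0^2-1)\,\Phi(E_h,\mathbf{I}_h,E_v,I_v),
\]
for an explicit nonnegative functional $\Phi$. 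LaSalle's invariance principle then pushes every $\omega$-limit set into $\{\dot{\mathcal V}_0=0\}$; a short back-substitution ($I_v=0\Rightarrow E_v=0$, then propagating through the triangular infected block via the nonsingularity of $\textrm{diag}(\alpha)-M$ and the equation for $E_h$) collapses this invariant set to the DFE. The instability when $\mathcal R_0^2>1$ is the standard consequence of the next-generation-matrix construction, the spectral abscissa of $F-V$ crossing into the right half-plane exactly at $\mathcal R_0^2=1$.

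For the GAS of the strongly endemic equilibrium when $\mathcal R_0^2>1$, I would invoke \Cref{GASEEAD} directly: any strongly endemic equilibrium of (\ref{ModelAD2CompactSharp}) is also a strongly endemic equilibrium of (\ref{ModelAD2Compact}) with $p=0$, $\mathbf{p}=\mathbf{0}$, so the Lyapunov function $\mathcal V=\mathcal V_h+\mathcal V_v$ constructed there applies verbatim; the immigration terms $c_0p\pi$ and $\sum_{i=1}^n c_ip_i\pi_h$ simply drop out, while the Kirchhoff matrix-tree cancellation of $\mathcal S$ and the $1-x+\ln x$ decomposition carry over unchanged. The main obstacle I expect to meet is the critical case $\mathcal R_0^2=1$ in the DFE argument: $\dot{\mathcal V}_0$ can then vanish on a strictly larger set than the DFE alone, which is exactly what forces the LaSalle back-propagation through the infected compartments, and choosing the eigenvector $\boldsymbol{\omega}$ so that this propagation indeed closes requires the irreducibility of the transfer matrix $M$ that follows from the incremental progressions on the superdiagonal of $G$.
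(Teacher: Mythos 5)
Your proposal is correct and follows essentially the same route as the paper, which disposes of part 1 by a left-eigenvector (linear Lyapunov function) argument and of part 2 by observing that the endemic equilibrium of System (\ref{ModelAD2CompactSharp}) is a particular case of \Cref{GASEEAD}; you simply supply the details of the eigenvector/LaSalle step that the paper omits. No gap to report.
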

The proof of the first part of \Cref{theo:SharpThreshold} follows using, for example, a left-eigenvector argument. We omit the details. The second part is particular case of \Cref{GASEEAD}. This result is new in itself.
 \section{Illustrations and Simulations}\label{IllustrationsNSimulations}
 In this section, we provide illustrations to highlight the effects of influx of immigrants and the transfer matrix on the disease dynamics and  provide some numerical simulations to showcase the results of \Cref{sec:GSVBAP}. To do so, we consider the case $n=4$. That is, there are four infectious stages in the host's infectivity. Unless otherwise stated, we consider the following baseline parameters:
 $$
 \pi_h=1000, a=0.7, \beta_{vh}=0.3,
\mu_h=\frac{1}{75\times365}\;\textrm{days}^{-1},\;
\nu_h=\frac{1}{15}\;\textrm{days}^{-1},\; 
\gamma_{12}=\frac{1}{8}\;\textrm{days}^{-1},\;  %recovery rate of hosts in stage 1
   $$%0.2; %transition rate of hosts in stage 2
$$% \gamma_3=\textrm{days}^{-1}, %transition  rate of hosts in stage 3
% \gamma_4=\frac{1}{6}\;\textrm{days}^{-1},\; %transition  rate of hosts in stage 4
 \gamma_{23}=\gamma_{34}=\frac{1}{6}\;\textrm{days}^{-1},\;
  \frac{1}{\eta}=0\;\textrm{days}^{-1},\; %recovery rate of latent
 \eta_1=\frac{1}{50}\;\textrm{days}^{-1},\; %recovery rate of hosts in stage 1
\eta_2=\eta_3=\frac{1}{30}\;\textrm{days}^{-1},\;%recovery rate of hosts in stage 2 and 3
\eta_4=\frac{1}{40}\;\textrm{days}^{-1},\; %recovery rate of hosts in stage 4
$$
$$\pi_v=10000, %308; %Recruitment of vector population
\mu_v=\frac{1}{15}\;\textrm{days}^{-1},\;  %0.25; %Vectors' death rate
\nu_v=\frac{1}{4}\;\textrm{days}^{-1},\;  %0.1; %%Vectors' latency rate
\delta_v=\frac{1}{20}\;\textrm{days}^{-1}.
 $$
 It is worthwhile noting that, although reasonable, these values do not necessarily match any particular arbovirus diseases. We have chosen them to encompass results of \Cref{sec:GSVBAP}. The transfer matrix $M$ and the vector proportions of influx of infected $\mathbf{p}$ are given  by
 $$
 M=\begin{pmatrix}
 0 & \delta_{21} &\delta_{31} &\delta_{41}\\
  \gamma_{12} &0 &\delta_{32} &\delta_{42}\\
   \gamma_{13} & \gamma_{23} &0 &\delta_{43}\\
    \gamma_{14} & \gamma_{24} &\gamma_{34} &0  
 \end{pmatrix},\quad
  \mathbf{p}=\begin{pmatrix}
p_1\\
 p_2\\
p_3\\
p_4 
 \end{pmatrix}.
 $$
 We vary the parameter $p$, the vector $\mathbf{p}$ and the matrix $M$ to investigate their impacts on the disease dynamics.
 
 \begin{figure}[ht]
\centering
 \subfigure[Dynamics of infectious hosts $I_i$, for $i=1,\dots,4$ when $\Beta=\mathbf{0}_{\R^n}$.]{
   \includegraphics[scale =.36]{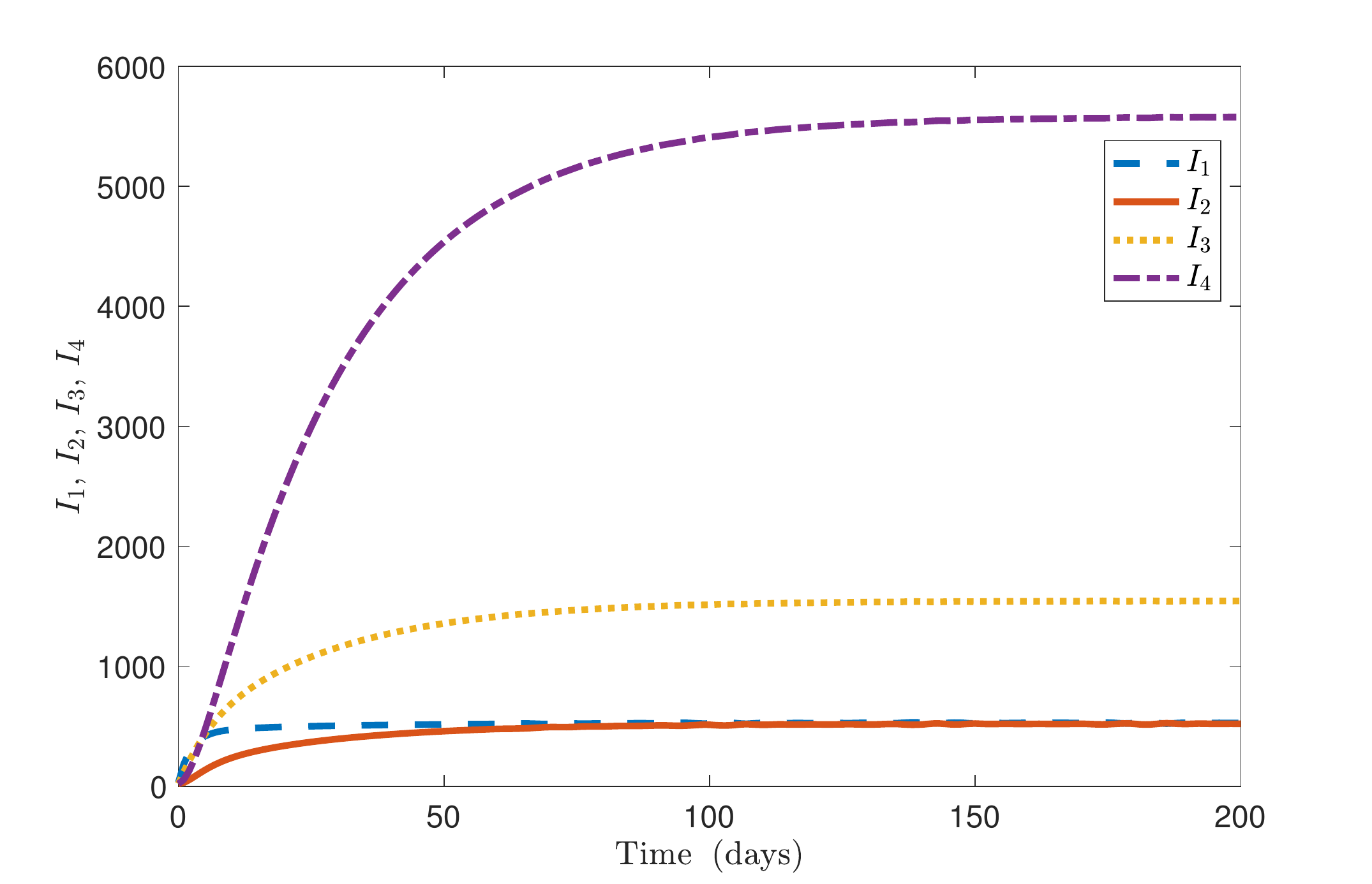}
\label{WealkyEndemicVectorsHost}}
\hspace{1mm}
 \subfigure[Dynamics of infected vectors when $\Beta=\mathbf{0}_{\R^n}$.]{
   \includegraphics[scale =.36]{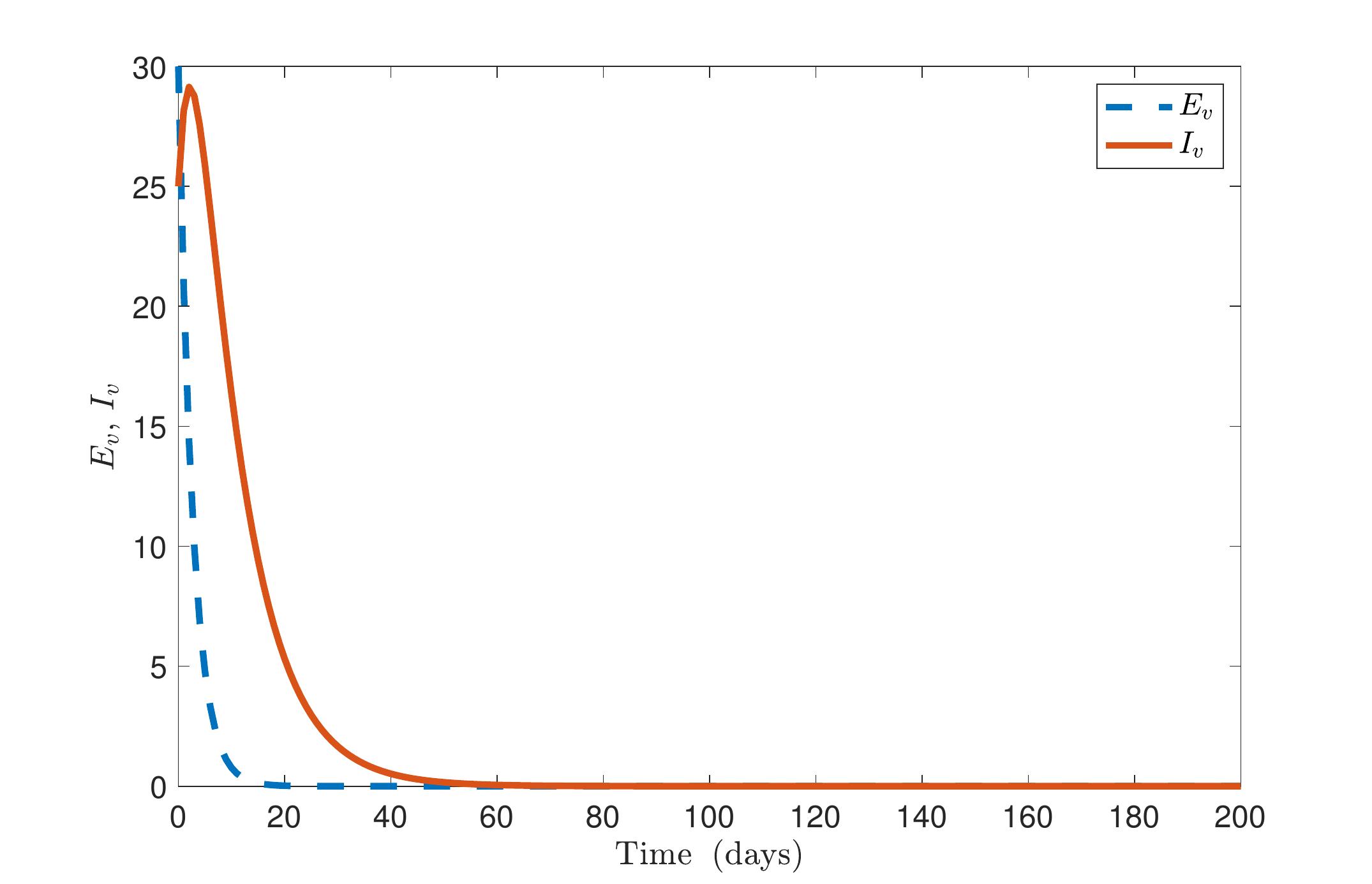}
\label{WealkyEndemicVector}}\\
 \subfigure[Dynamics of infected hosts when $\Beta\neq\mathbf{0}_{\R^4}$ and $p=0.01\neq0$]{
   \includegraphics[scale =.35]{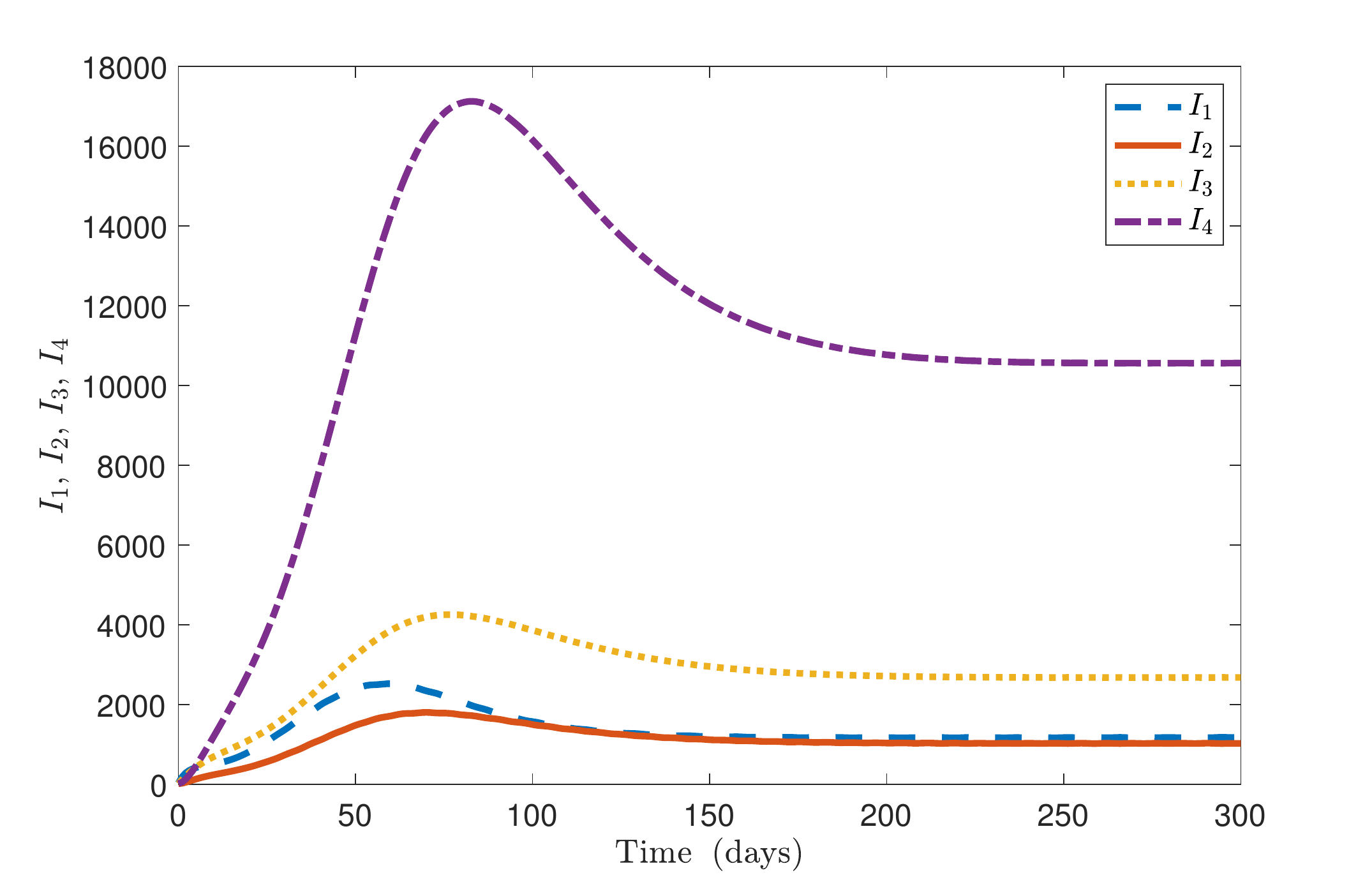}
\label{SEEBetaNpNonzeroHost}}
\hspace{1mm}
 \subfigure[Dynamics of infected vectors when $\Beta\neq\mathbf{0}_{\R^n}$ and $p=0.01\neq0$.]{
   \includegraphics[scale =.35]{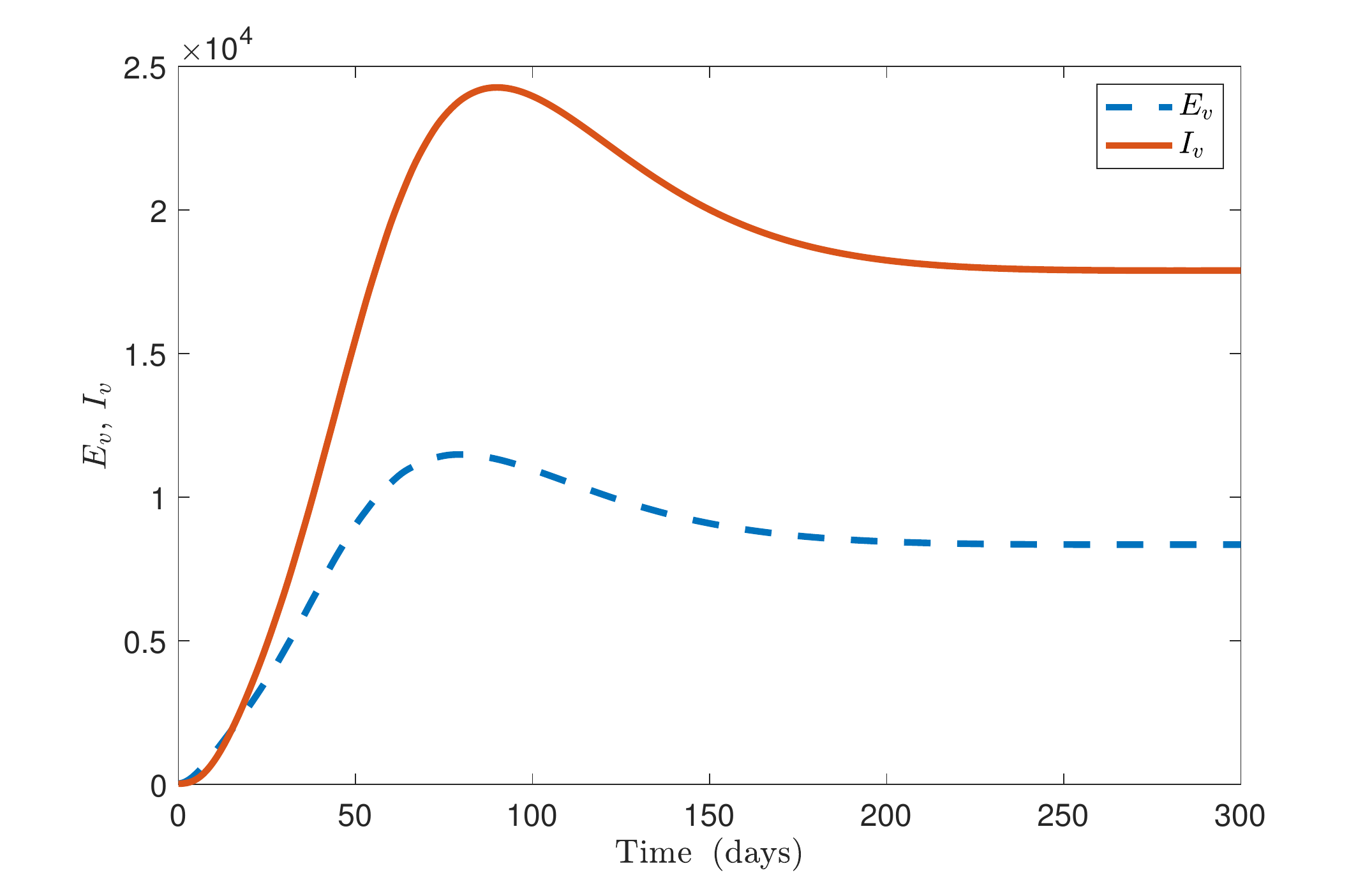}
\label{SEEBetaNpNonzeroVectors}}%\\
\caption{Effects of host-vector transmission on the dynamics of Model (\ref{ModelAD2Compact}) with $n=4$. The proportions of infectious influx are $p_1=0.2$, $p_3=0.1$, $p_4=0$ and $p_{5}=0.3$. The transfer matrix $M$ is such as $\gamma_{13}=\gamma_{24}=0.1$, $\gamma_{14}=0.2$, $\delta_{21}=0.01$, $\delta_{31}=0.02$,  $\delta_{41}=0.001$, $\delta_{32}=0.03$, $\delta_{42}=0.01$ and $\delta_{43}=0.03$.
 } \label{fig:EffectsBeta}
\end{figure}

 \Cref{WealkyEndemicVectorsHost} and \Cref{WealkyEndemicVector} depict the dynamics of the Model (\ref{ModelAD2Compact}) when there is no transmission form hosts to vectors. That is, whenever $\Beta=\mathbf{0}_{\R^n}$. In this case, with $\mathbf{p}\neq\mathbf{0}_{\R^n}$, the infected hosts reach an endemic level ( \Cref{WealkyEndemicVectorsHost}) while the disease dies out in the vector population (\Cref{WealkyEndemicVector}). This is in accordance is the prediction of \Cref{ExistenceEEAD}, \Cref{i}, where the weakly endemic equilibrium is GAS (\Cref{GASEEAD}).
 
For $\Beta=(0.2,0,0,0.5)^T\neq\mathbf{0}_{\R^4}$ and $p=0.01$, the trajectories converge to a strongly endemic equilibrium (\Cref{SEEBetaNpNonzeroHost} and \Cref{SEEBetaNpNonzeroVectors}) as the hypotheses of \Cref{ExistenceEEAD}, \Cref{ii} are satisfied. \\ 
 
 To illustrate \Cref{ExistenceEEAD}, \Cref{iii}, suppose that $\Beta=(\beta_1,0,0,0)^T$ and $\mathbf{p}=(0,0,p_3,p_4)^T$ where $\beta_1>0$, $p_3>0$ and $p_4>0$. By choosing $\gamma_{14}=\gamma_{24}=\delta_{21}=\delta_{31}=\delta_{42}=0$, we obtain:
 
 \begin{equation}\label{BetaTM}\Beta^T(\textrm{diag}(\alpha)-M)^{-1}\mathbf{p}=\frac{\beta_1\alpha_2\delta_{41}(p_3\gamma_{34}+p_4\alpha_3)}{\det(\textrm{diag}(\alpha)-M))}>0.\end{equation}
 Using this setup, \Cref{iii} of \Cref{ExistenceEEAD} anticipates the existence of an strongly endemic equilibrium. Indeed, \Cref{fig:Case3} represents the dynamics of hosts (\Cref{fig:HostCase3}) and vectors (\Cref{fig:VectorCase3}) in  Model (\ref{ModelAD2Compact}) in this case.

  \begin{figure}[H]
\centering
 \subfigure[Dynamics of infectious hosts $I_i$, for $i=1,\dots,4$.]{
   \includegraphics[scale =.28]{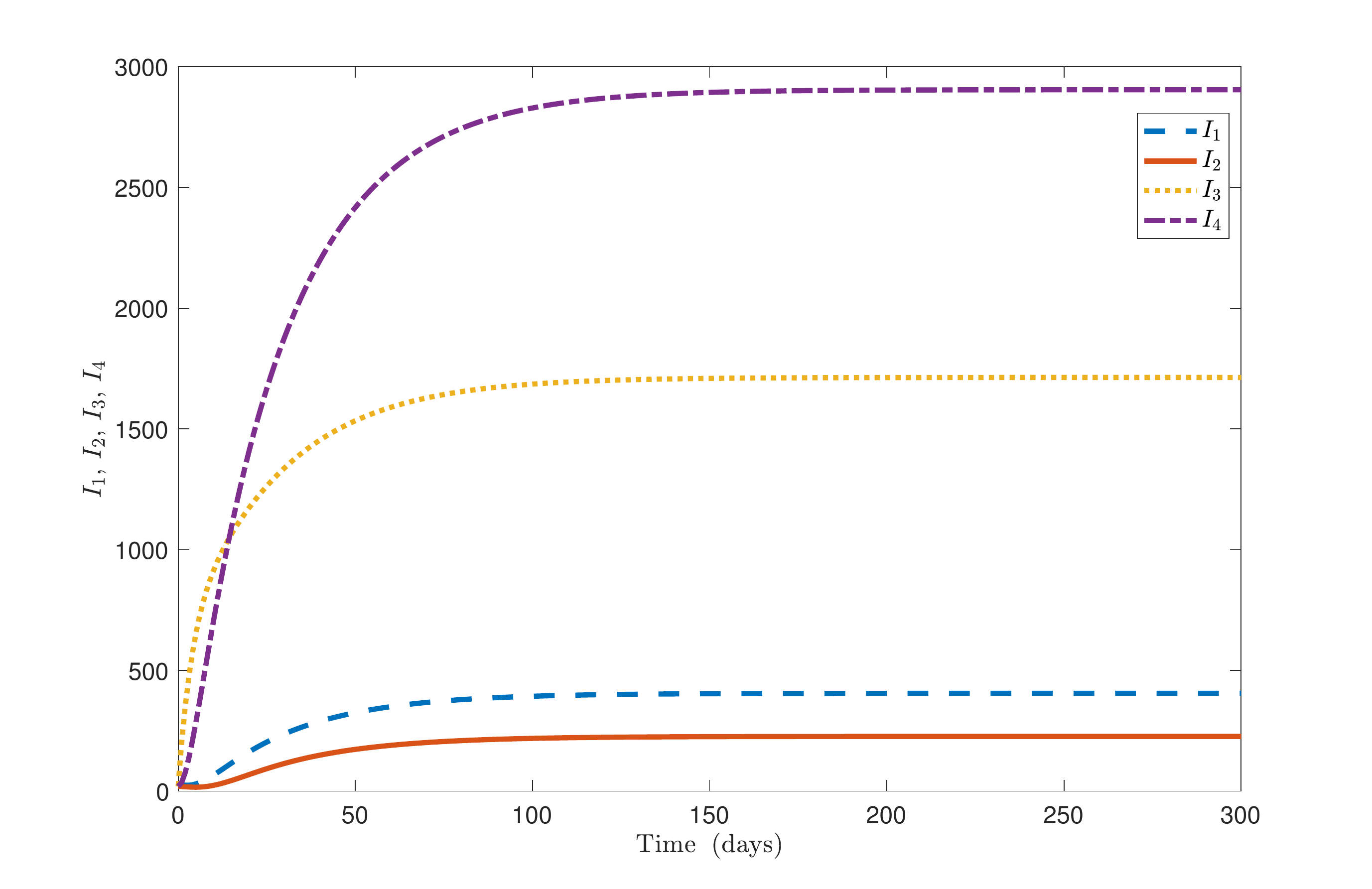}
\label{fig:HostCase3}}
\hspace{1mm}
 \subfigure[Dynamics of infected vectors.]{
   \includegraphics[scale =.28]{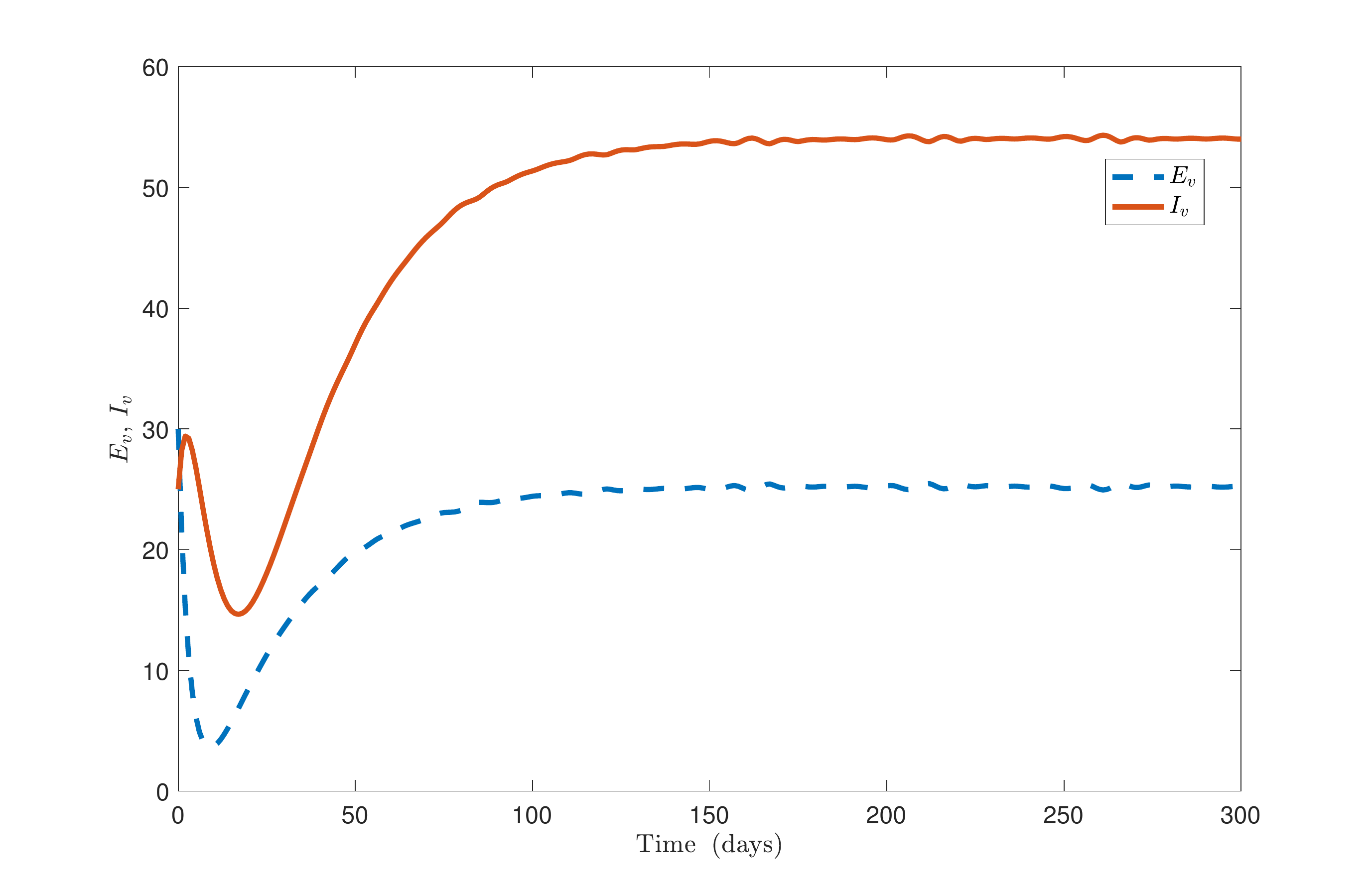}
\label{fig:VectorCase3}}%
\caption{Dynamics of infected hosts and vectors when the hypotheses of \Cref{ExistenceEEAD}, \Cref{iii} are satisfied. The proportions of infectious influx are $p=0$, $\mathbf{p}=(0,0,p_3,p_4)^T=(0,0,0.2,0.0001)^T$ and $p_{5}=0.3$. The transfer matrix $M$ is such as $\gamma_{13}=0.1$, $\gamma_{14}=\gamma_{24}=0$, $\delta_{21}=0.01$, $\delta_{31}=\delta_{32}=\delta_{42}=0$, $\delta_{41}=0.035$,  and $\delta_{43}=0.03$.
 } \label{fig:Case3}
\end{figure}
 
 By choosing $\delta_{41}=0$, \Cref{BetaTM} implies that $\Beta^T(\textrm{diag}(\alpha)-M)^{-1}\mathbf{p}=0$ and thus satisfying the conditions of \Cref{ExistenceEEAD}, \Cref{iv}. And so, a weakly endemic equilibrium $({S}_h^\diamond, 0,\mathbf{I}_h^\diamond,{S}_v^0,0,0)$ or a strongly endemic equilibrium $(\tilde{S}_h, \tilde{E}_h,\tilde{\mathbf{I}}_h,\tilde{S}_v,\tilde{E}_v,\tilde{I}_v)$ exists depending on whether ${\mathcal N}_0^2(p,\mathbf{p},p_{n+1})$ is below or greater than unity, respectively. \Cref{fig:HostsCase4N0Less1} shows that the hosts' infection dies out at stage 1 and 2 while it persists at stage 3 and 4. The disease dies out the vectors' population (\Cref{fig:VectorsCase4N0Less1}). It is worthwhile noting that the disease is maintained at stages 1 and 2, due to the influx of infectious individuals at these stages, without whom, the interaction between  hosts and vectors is not sufficient to sustain the infectious. That is, ${\mathcal N}_0^2(p,\mathbf{p},p_{n+1})\leq1$.  Under the same transfer matrix $M$ and the infectious influx $\mathbf{p}$ configurations, but choosing the entomological parameters $a=0.9$ and $\beta_{vh}=0.9$, we obtain  ${\mathcal N}_0^2(p,\mathbf{p},p_{n+1})=1.6051>1$. This leads to a strongly endemic equilibrium (\Cref{fig:HostsCase4N0Greater1} and \Cref{fig:VectorsCase4N0Greater1}).
  \begin{figure}[ht]
\centering
 \subfigure[Dynamics of infectious hosts $I_i$, for $i=1,\dots,4$ when ${\mathcal N}_0^2=0.0066\leq1$.]{
   \includegraphics[scale =.28]{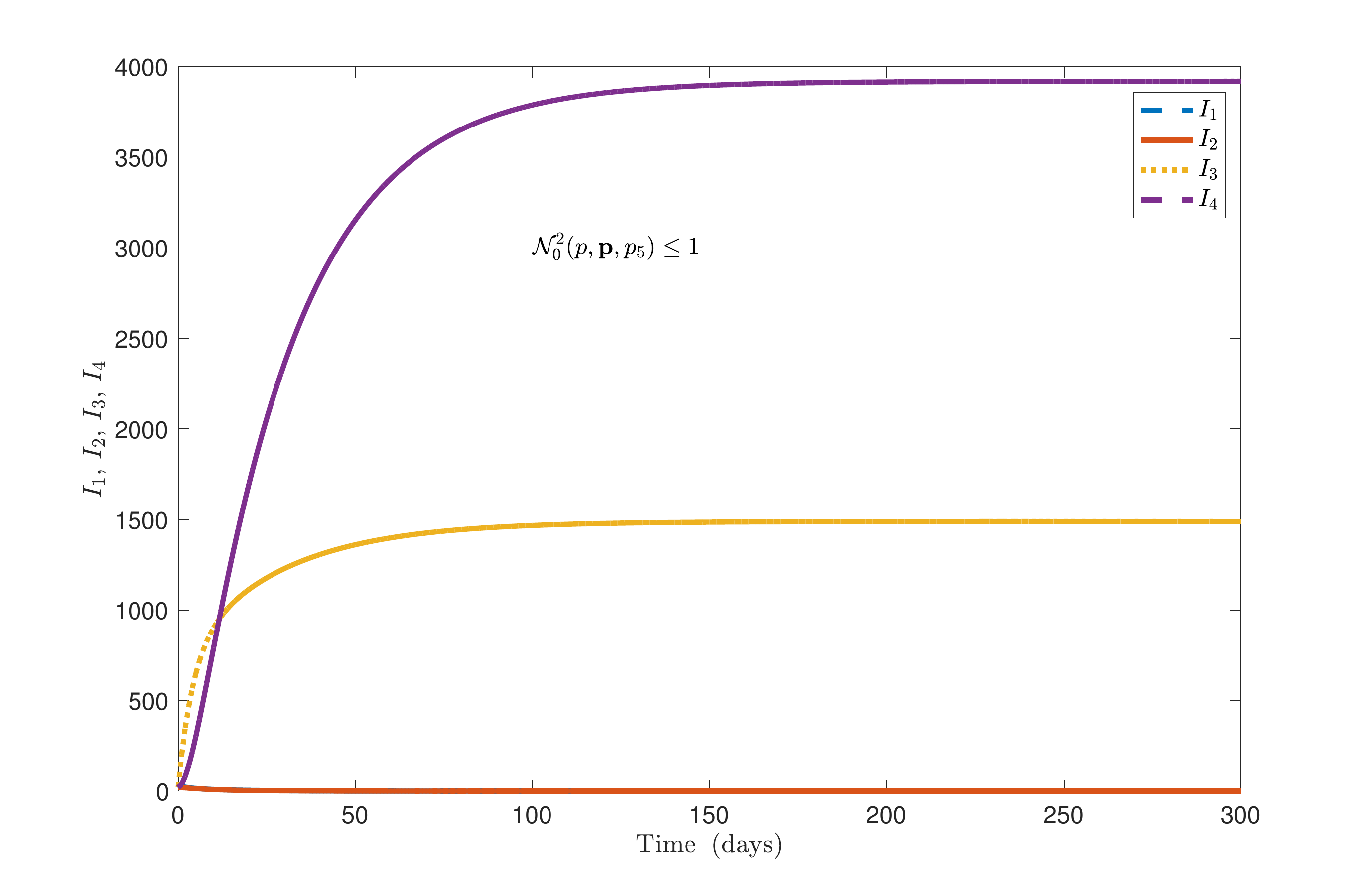}
\label{fig:HostsCase4N0Less1}}
\hspace{1mm}
 \subfigure[Dynamics of infected vectors when ${\mathcal N}_0^2=0.0066\leq1$.]{
   \includegraphics[scale =.28]{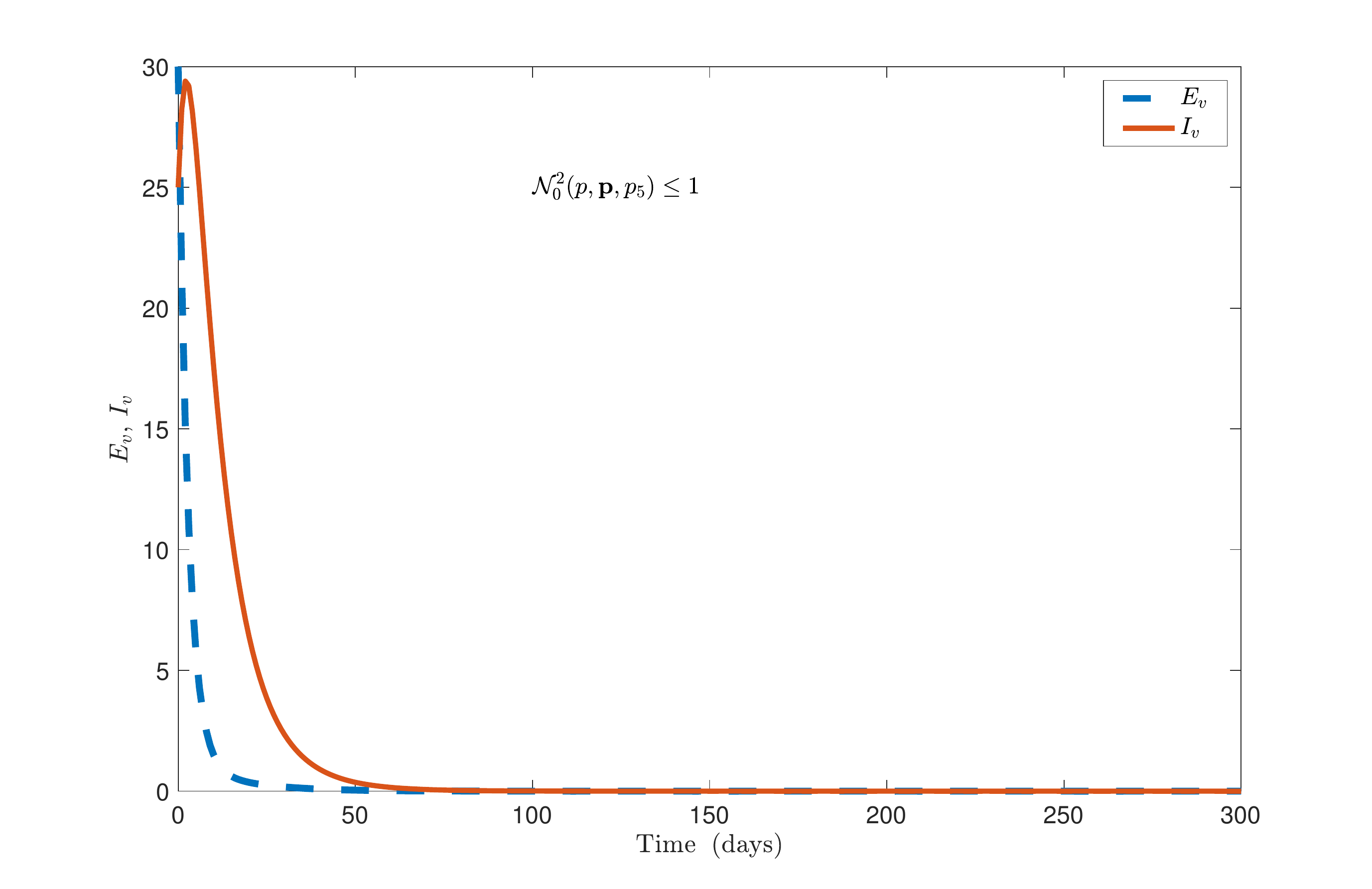}
\label{fig:VectorsCase4N0Less1}}\\
 \subfigure[Dynamics of infectious hosts $I_i$, for $i=1,\dots,4$ when ${\mathcal N}_0^2>1$]{
   \includegraphics[scale =.28]{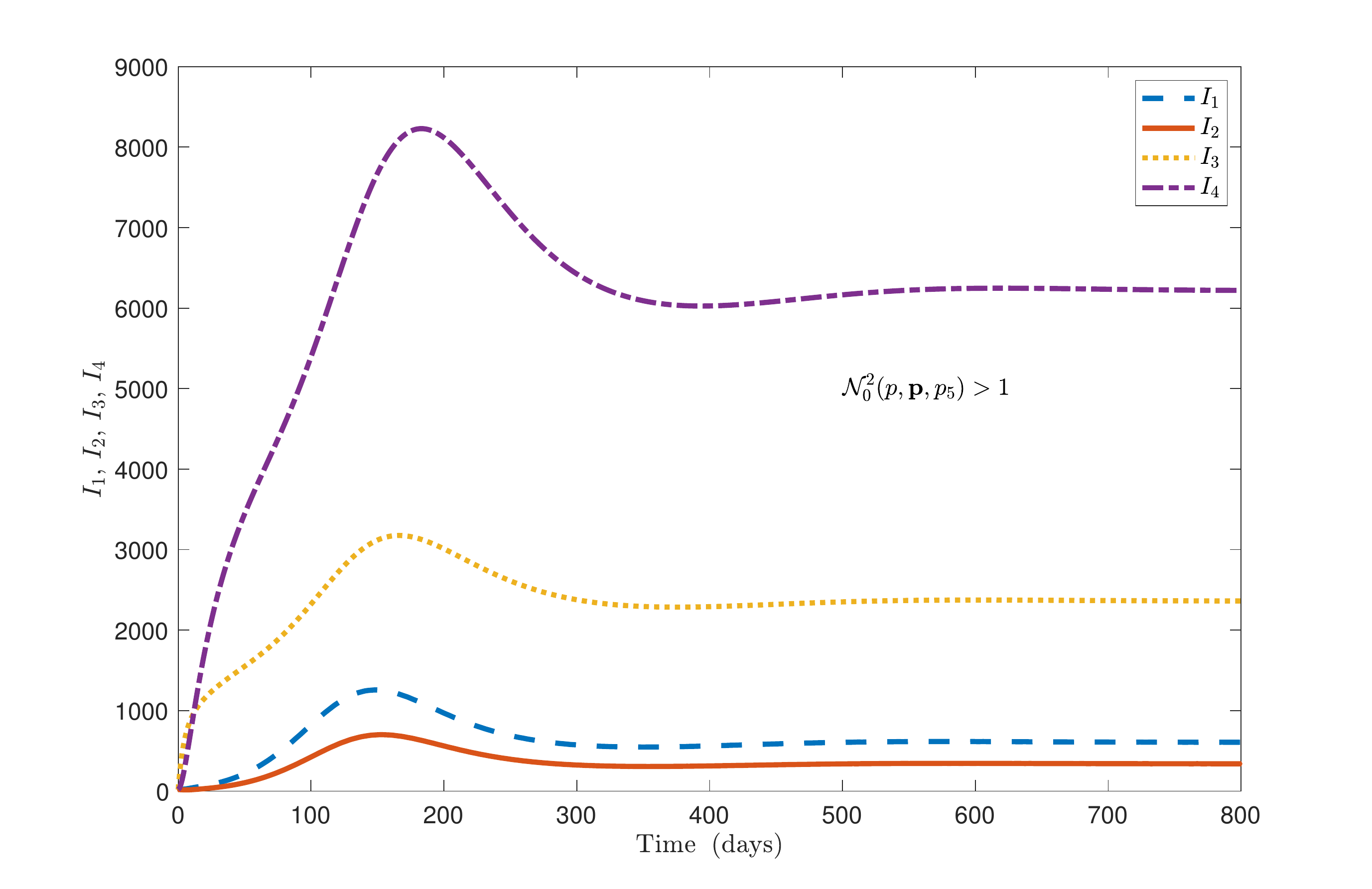}
\label{fig:HostsCase4N0Greater1}}
\hspace{1mm}
 \subfigure[Dynamics of infected vectors when ${\mathcal N}_0^2<1$.]{
   \includegraphics[scale =.28]{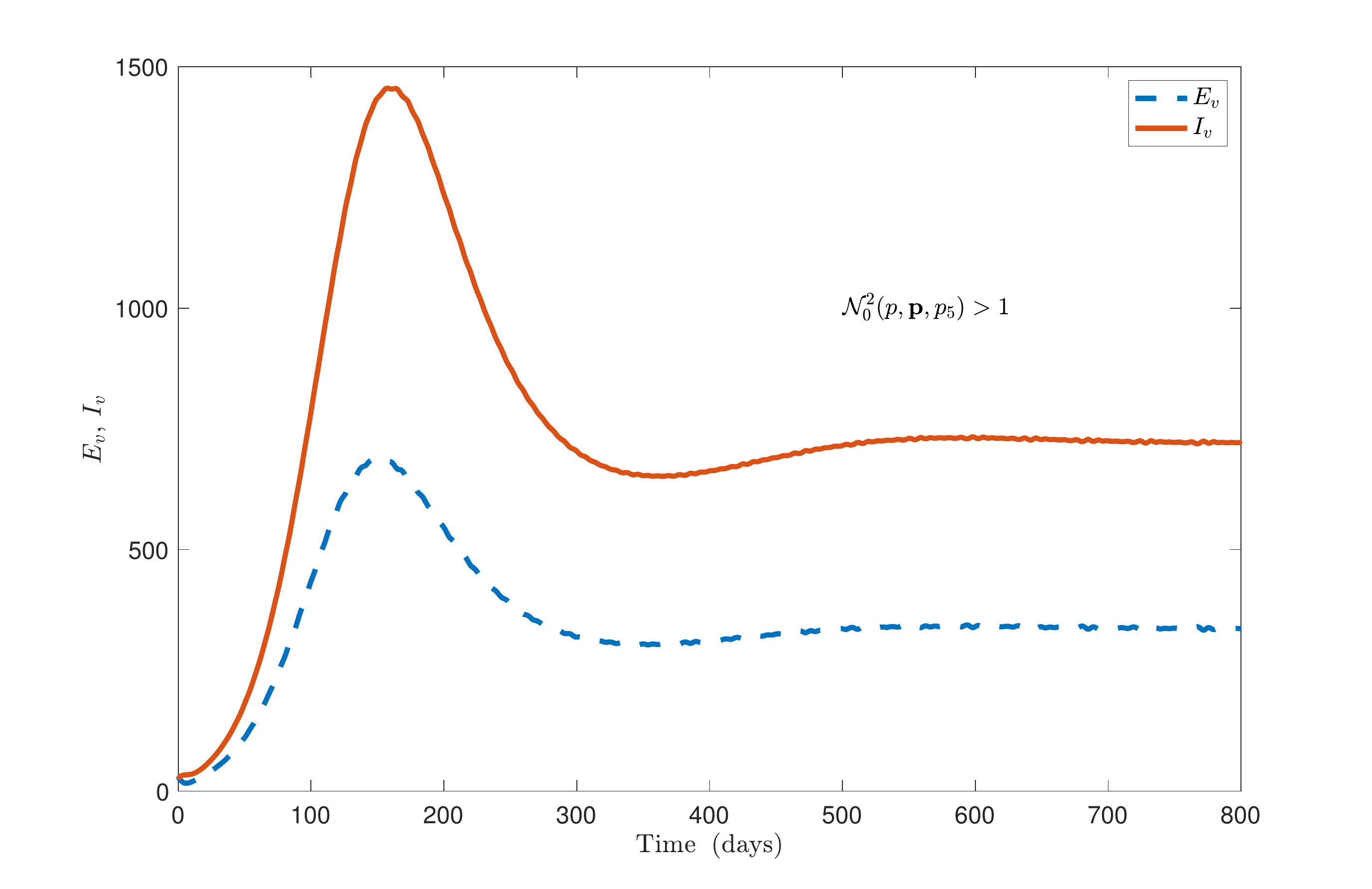}
\label{fig:VectorsCase4N0Greater1}}%\\
\caption{Dynamics of infected hosts and vectors when the hypotheses of \Cref{ExistenceEEAD}, \Cref{iv} are satisfied. The proportions of infectious influx are $p=0$, $\mathbf{p}=(0,0,p_3,p_4)^T=(0,0,0.2,0.0001)^T$ and $p_{5}=0.3$. The transfer matrix $M$ is such as $\gamma_{13}=0.1$, $\gamma_{14}=\gamma_{24}=0$, $\delta_{21}=0.01$, $\delta_{31}=\delta_{32}=\delta_{41}=\delta_{42}=0$,  and $\delta_{43}=0.03$. 
 } \label{fig:Case4N0LessNGreater1}  
\end{figure}

\section{Conclusion}
Modeling the dynamics of vector-borne diseases have often been based on the assumption that the recruitment into the population is completely susceptible, and thereby making it difficult to assess the effects of the infected or infectious individuals who enters the population. However, the recent surge of vector-borne diseases such as Chikunguyna and Dengue in areas previously free from the pre-cited diseases, that also  coincides with an increase of global travel across the world, makes the study of the effects of new arrivals on vector-borne diseases dynamics a necessity. Indeed, the arrival of new individuals from endemic areas, or the return of local residents after a stint in areas where the vector-borne diseases are endemic, could potentially result in infecting the local vector populations and the cycle of host-vector infection could start or accelerate.\\

In this paper, we formulate a general staged-progression and stage-regression vector-borne diseases to capture some key features of their dynamics. Particularly, we investigate the effects of the, often swept under the rug, influx of viremic individuals into the population and vectors' dynamics. We also explore the impacts of treatment and repeated exposure.  Indeed, assuming the infectious individuals in the population are undergoing a treatment program, whereby improving their health status; this could lead an infectious individual to go from stage $i$ to a ``lower" stage $i-k$, where $1\leq k\leq i$.  Similarly, the repeated exposure of infected hosts to infected vectors could lead to more infectious bite. This could lead to increase in infected hosts' level of parasitemia and thus worsening its health status. In this case, the infected host progresses from stage $i$ to an ``upper" stage $i+k$ where $1\leq i\leq k$. And so, we incorporate of these two phenomena of progression and regression on the hosts' dynamics, which happens to have an altering effects on the qualitative dynamics of the model.\\

We derive an staged-progression vector-borne model with $n$ infectious stages. The host-vector dynamics follows and $SEI^nR-SEI$ framework. We assume that a proportion --of the overall recruitment-- $p$ and $p_i$, for $i=1,2,\dots,n$, of latent and infectious of stage $i$, respectively, enter into the population. An infectious host at stage $i$ could improve its status from stage $i$ to $i+j$, with $i\leq j$, at a rate $\delta_{ij}$ or worsen its viremicity form stage $i$ to stage $ i\geq j$, at a rate $\gamma_{ij}$.  We derived all steady states of the general system and provided conditions under which they exists (\Cref{ExistenceEEAD}). Its turns out that the model has multiple equilibria, depending on the connectivity configuration between host's infectious stages and the influx of infectious arrivals. However, all of the equilibria are either strongly endemic (SEE) --for which all of the infected and infectious components are positive--, or weakly endemic (WEE)--for which some of host and vectors' infected or infectious classes are zero. We show the influx of latent individuals into the population guarantees the existence of an SEE, which is globally asymptotically stable (\Cref{GASEEAD}). This case is particularly important for controlling vector-borne diseases as it pertains to public health policies since it is difficult to detect latent even if screening measures were in place.  \\

When there is no influx of latent but the host-vector transmission vector $\Beta$, the vector of influx of infectious $\mathbf{p}$ and transfer rates matrix $M$ are such that $\Beta^T(\textrm{diag}(\alpha)-M)^{-1}\mathbf{p}=0$, that is, whenever the infectious stages with non zero influx do not transmit the infection to the susceptible vector population. We show that if there is an influx in all infectious stages --$\mathbf{p}\gg0$--, then   $\Beta=0$ and the disease dies out in the vector populations (see \Cref{WEEGAS} and the first part of its proof). If $\Beta\neq0$, then there will not be influx of infectious in all stages. Moreover, if  infectious hosts in the stages with influx do not improve their to stages that are capable of transmitting the infection to vectors, that is stages where $\beta_i>0$, then a threshold $\mathcal N_0^2(p,\mathbf{p},p_{n+1})$ arises. The disease will die out in the vector population if $\mathcal N_0^2(p,\mathbf{p},p_{n+1})$ is below unity and persists otherwise.\\

Our results show that when there is no influx of infected and infectious individuals, the considered model becomes a vector-borne disease with $n$ infectious stages that accounts for amelioration from and progression to any stages.  We show that this model has a sharp threshold phenomenon, for which the dynamics is completely determined by the basic reproduction number $\mathcal R_0^2$ (\Cref{theo:SharpThreshold}). It turns out that  $\mathcal R_0^2=\mathcal N_0^2(0,\mathbf{0},0)$, and if $\mathcal R_0^2\leq1$, the disease-free equilibrium exists and is globally asymptotically stable. Moreover, if $\mathcal R_0^2>1$, we show that an endemic equilibrium exists and is globally asymptotically stable.

%\bibliography{InfectiveImmigrants}

\begin{thebibliography}{10}

\bibitem{barnett2008role}
{\sc E.~D. Barnett and P.~F. Walker}, {\em Role of immigrants and migrants in
  emerging infectious diseases}, Medical Clinics of North America, 92 (2008),
  pp.~1447--1458.

\bibitem{benedict2007spread}
{\sc M.~Q. Benedict, R.~S. Levine, W.~A. Hawley, and L.~P. Lounibos}, {\em
  Spread of the tiger: global risk of invasion by the mosquito aedes
  albopictus}, Vector-borne and zoonotic Diseases, 7 (2007), pp.~76--85.

\bibitem{BicharaIggidrSmith2017}
{\sc D.~Bichara, A.~Iggidr, and L.~Smith}, {\em Multi-stage vector-borne
  zoonoses models: A global analysis}, Bulletin of Mathematical Biology,
  https://doi.org/10.1007/s11538-018-0435-1 (2018).

\bibitem{bollobas2013modern}
{\sc B.~Bollob{\'a}s}, {\em Modern graph theory}, Springer Science \& Business
  Media, 2013.

\bibitem{BraVdd01}
{\sc F.~Brauer and P.~van~den Driessche}, {\em Models for transmission of
  disease with immigration of infectives}, Math. Biosci., 171 (2001).

\bibitem{CasThi95}
{\sc C.~Castillo-Chavez and H.~R. Thieme}, {\em Asymptotically autonomous
  epidemic models}, in Mathematical Population Dynamics: Analysis of
  Heterogeneity, Volume One: Theory of Epidemics,, O.~Arino, A.~D.E., and
  M.~Kimmel, eds., Wuerz, 1995.

\bibitem{CDCVectorBorneUSA}
{\sc {Centers for Disease Control and Prevention}}, {\em Illnesses from
  mosquito, tick, and flea bites increasing in the us}, Centers for Disease
  Control and Prevention,
  \url{https://www.cdc.gov/media/releases/2018/p0501-vs-vector-borne.html},
  (2018).

\bibitem{cruz2012control}
{\sc G.~Cruz-Pacheco, L.~Esteva, and C.~Vargas}, {\em Control measures for
  chagas disease}, Mathematical biosciences, 237 (2012), pp.~49--60.

\bibitem{Esteva98Dengue}
{\sc L.~Esteva and C.~Vargas}, {\em Analysis of a dengue disease transmission
  model}, Math. Biosci., 150 (1998), pp.~131--151.

\bibitem{gould2010first}
{\sc E.~A. Gould, P.~Gallian, X.~De~Lamballerie, and R.~N. Charrel}, {\em First
  cases of autochthonous dengue fever and chikungunya fever in france: from bad
  dream to reality!}, Clinical microbiology and infection, 16 (2010),
  pp.~1702--1704.

\bibitem{grandadam2011chikungunya}
{\sc M.~Grandadam, V.~Caro, S.~Plumet, J.-M. Thiberge, Y.~Souares, A.-B.
  Failloux, H.~J. Tolou, M.~Budelot, D.~Cosserat, I.~Leparc-Goffart, and
  P.~Despr{\`e}s}, {\em Chikungunya virus, southeastern france}, Emerging
  infectious diseases, 17 (2011), p.~910.

\bibitem{guo2012impacts}
{\sc H.~Guo and M.~Y. Li}, {\em Impacts of migration and immigration on disease
  transmission dynamics in heterogeneous populations}, Discrete Contin. Dyn.
  Syst. Ser. B, 17 (2012), pp.~2413--2430.

\bibitem{guo2012global}
{\sc H.~Guo, M.~Y. Li, and Z.~Shuai}, {\em Global dynamics of a general class
  of multistage models for infectious diseases}, SIAM Journal on applied
  mathematics, 72 (2012), pp.~261--279.

\bibitem{isaacson1989airport}
{\sc M.~Isa{\"a}cson}, {\em Airport malaria: a review}, Bulletin of the World
  Health Organization, 67 (1989), p.~737.

\bibitem{johnson2016modeling}
{\sc T.~L. Johnson, E.~L. Landguth, and E.~F. Stone}, {\em Modeling relapsing
  disease dynamics in a host-vector community}, PLoS Negl Trop Dis, 10 (2016),
  p.~e0004428.

\bibitem{jones2008global}
{\sc K.~E. Jones, N.~G. Patel, M.~A. Levy, A.~Storeygard, D.~Balk, J.~L.
  Gittleman, and P.~Daszak}, {\em Global trends in emerging infectious
  diseases}, Nature, 451 (2008), p.~990.

\bibitem{kilpatrick2012drivers}
{\sc A.~M. Kilpatrick and S.~E. Randolph}, {\em Drivers, dynamics, and control
  of emerging vector-borne zoonotic diseases}, The Lancet, 380 (2012),
  pp.~1946--1955.

\bibitem{Limul95}
{\sc M.~Li and J.~S. Muldowney}, {\em On r.a. smith's automonmous convergence
  theorem}, Rocky Mountain J. Math., 25 (1995), pp.~365--379.

\bibitem{MR95k:92022}
{\sc M.~Y. Li and J.~S. Muldowney}, {\em Global stability for the {SEIR} model
  in epidemiology}, Math. Biosci., 125 (1995), pp.~155--164.

\bibitem{1008.92032}
{\sc C.~McCluskey}, {\em {A model of HIV/AIDS with staged progression and
  amelioration.}}, Math. Biosci., 181 (2003), pp.~1--16.

\bibitem{1056.92052}
{\sc C.~McCluskey and P.~van~den Driessche}, {\em {Global analysis of two
  tuberculosis models.}}, J. Dyn. Differ. Equations, 16 (2004), pp.~139--166.

\bibitem{mitchell1995geographic}
{\sc C.~J. Mitchell}, {\em Geographic spread of aedes albopictus and potential
  for involvement in arbovirus cycles in the mediterranean basin}, Journal of
  Vector ecology, 20 (1985), pp.~44--58.

\bibitem{moon1970counting}
{\sc J.~Moon}, {\em Counting labeled trees,}.
\newblock Canadian Math, Monographs, 1970.

\bibitem{palmer2018dynamics}
{\sc C.~Palmer, E.~Landguth, E.~Stone, and T.~Johnson}, {\em The dynamics of
  vector-borne relapsing diseases}, Math. Biosci., 297 (2018).

\bibitem{paty2014large}
{\sc M.~Paty, C.~Six, F.~Charlet, G.~Heuz{\'e}, A.~Cochet, A.~Wiegandt,
  J.~Chappert, D.~Dejour-Salamanca, A.~Guinard, P.~Soler, V.~Servas,
  M.~Vivier-Darrigol, M.~Ledrans, M.~Debruyne, O.~Schaal, C.~Jeannin,
  B.~Helynck, I.~Leparc-Goffart, and B.~Coignard}, {\em Large number of
  imported chikungunya cases in mainland france, 2014: a challenge for
  surveillance and response}, Eurosurveillance, 19 (2014), p.~20856.

\bibitem{poletti2011transmission}
{\sc P.~Poletti, G.~Messeri, M.~Ajelli, R.~Vallorani, C.~Rizzo, and S.~Merler},
  {\em Transmission potential of chikungunya virus and control measures: the
  case of italy}, PLoS One, 6 (2011), p.~e18860.

\bibitem{rezza2007infection}
{\sc G.~Rezza, L.~Nicoletti, R.~Angelini, R.~Romi, A.~Finarelli, M.~Panning,
  P.~Cordioli, C.~Fortuna, S.~Boros, F.~Magurano, G.~Silvi, P.~Angelini,
  M.~Dottori, M.~Ciufolini, G.~Majori, and A.~Cassone}, {\em Infection with
  chikungunya virus in italy: an outbreak in a temperate region}, The Lancet,
  370 (2007), pp.~1840--1846.

\bibitem{rogers2006climate}
{\sc D.~Rogers and S.~Randolph}, {\em Climate change and vector-borne
  diseases}, Advances in parasitology, 62 (2006), pp.~345--381.

\bibitem{shroyer1986aedes}
{\sc D.~A. Shroyer}, {\em \textit{AEDES ALBOPICTUS} and arboviruses: A concise
  review of the literaturei}, Journal of the American Mosquito Control
  Association,  (1986).

\bibitem{tabata2016zika}
{\sc T.~Tabata, M.~Petitt, H.~Puerta-Guardo, D.~Michlmayr, C.~Wang,
  J.~Fang-Hoover, E.~Harris, and L.~Pereira}, {\em Zika virus targets different
  primary human placental cells, suggesting two routes for vertical
  transmission}, Cell host \& microbe, 20 (2016), pp.~155--166.

\bibitem{MR1993355}
{\sc H.~R. Thieme}, {\em Mathematics in population biology}, Princeton Series
  in Theoretical and Computational Biology, Princeton University Press,
  Princeton, NJ, 2003.

\bibitem{tumwiine2010host}
{\sc J.~Tumwiine, J.~Mugisha, and L.~Luboobi}, {\em A host-vector model for
  malaria with infective immigrants}, Journal of Mathematical Analysis and
  Applications, 361 (2010), pp.~139--149.

\bibitem{vega2013high}
{\sc A.~Vega-Rua, K.~Zouache, V.~Caro, L.~Diancourt, P.~Delaunay, M.~Grandadam,
  and A.-B. Failloux}, {\em High efficiency of temperate aedes albopictus to
  transmit chikungunya and dengue viruses in the southeast of france}, PLoS
  One, 8 (2013), p.~e59716.

\bibitem{0478.93044}
{\sc M.~Vidyasagar}, {\em {Decomposition techniques for large-scale systems
  with nonadditive interactions: Stability and stabilizability.}}, IEEE Trans.
  Autom. Control, 25 (1980), pp.~773--779.

\bibitem{wilder20142012}
{\sc A.~Wilder-Smith, M.~Quam, O.~Sessions, J.~Rocklov, J.~Liu-Helmersson,
  L.~Franco, and K.~Khan}, {\em The 2012 dengue outbreak in madeira: exploring
  the origins}, Euro Surveill., 19 (2014), p.~20718.

\bibitem{world2004world}
{\sc {World Health Organization} et~al.}, {\em The world health report: 2004:
  changing history}, 2004.

\bibitem{world2014global}
\leavevmode\vrule height 2pt depth -1.6pt width 23pt, {\em A global brief on
  vector-borne diseases}, 2014.

\end{thebibliography}
%\bibliographystyle{siam}

\end{document}